\newtheorem{theorem}{Theorem}
\newtheorem{lemma}{Lemma}
\newtheorem*{corollary}{Corollary}
\newcommand\numberthis{\addtocounter{equation}{1}\tag{\theequation}}
\DeclareMathOperator{\pfaffian}{Pf}
\newcommand*{\addFileDependency}[1]{
  \typeout{(#1)}
  \@addtofilelist{#1}
  \IfFileExists{#1}{}{\typeout{No file #1.}}
}
\newcommand*{\myexternaldocument}[1]{%
    \externaldocument{#1}%
    \addFileDependency{#1.tex}%
    \addFileDependency{#1.aux}%
}
\title{\bf Inference and Sampling of $K_{33}$-free Ising Models}
\date{}
\author{Valerii Likhosherstov$^{1}$, Yury Maximov$^{(1,2)}$ and Michael Chertkov$^{(1,2,3)}$\\
$^{1}$ Skolkovo Institute of Science and Technology, Moscow, Russia\\
$^{2}$ Theoretical Division and Center for Nonlinear Studies,\\ Los Alamos National Laboratory, Los Alamos, NM, USA\\
$^{3}$ Graduate Program in Applied Mathematics,\\ University of Arizona, Tucson, AZ, USA}
\begin{document}

\maketitle

\begin{abstract}
We call an Ising model tractable when it is possible to compute its partition function value (statistical inference) in polynomial time. The tractability also implies an ability to sample configurations of this model in polynomial time. The notion of tractability extends the basic case of planar zero-field Ising models. Our starting point is to describe algorithms for the basic case, computing partition function and sampling efficiently.
Then, we extend our tractable inference and sampling algorithms to models whose triconnected components are either planar or graphs of $O(1)$ size. In particular, it results in a polynomial-time inference and sampling algorithms for $K_{33}$ (minor)-free topologies of zero-field Ising models---a generalization of planar graphs with a potentially unbounded genus. \footnote{The paper to appear at the \emph{Proceedings of the 36-th International Conference on Machine Learning}, Long Beach, California, PMLR 97, 2019.  Implementation of the algorithms is available at \href{https://github.com/ValeryTyumen/planar_ising}{https://github.com/ValeryTyumen/planar\_ising}.}
\end{abstract}

\section{Introduction}

Computing the partition function of the Ising model is generally intractable, even an approximate solution in the special anti-ferromagnetic case of arbitrary topology would have colossal consequences in the complexity theory \cite{jerrum-sinclair}. Therefore, a question of interest---rather than addressing the general case---is to look after tractable families of Ising models. In the following, we briefly review tractability related to planar graphs and graphs embedded in surfaces of small genus.

\textbf{Related work.} Onsager \cite{onsager} gave a closed-form solution for the partition function in the case of a homogeneous interaction Ising model over an infinite two-dimensional square grid without a magnetic field.
This result has opened an exciting era of phase transition discoveries, which is arguably one of the most significant contributions in theoretical and mathematical physics of the 20th century.
Then, Kac and Ward \cite{kac-ward} showed in the case of a finite square lattice that the problem of the partition function computation is reducible to a determinant. Kasteleyn \cite{kasteleyn} generalized the results to the case of an arbitrary inhomogeneous interaction Ising model over an arbitrary planar graph. Kasteleyn's construction was based on mapping of the Ising model to a perfect matching (PM) model with specially defined weights over a modified graph. Kasteleyn's construction was also based on the so-called Pfaffian orientation, which allows counting of PMs by finding a single Pfaffian (or determinant) of a matrix. Fisher \cite{fisher} simplified Kasteleyn's construction such that the modified graph remained planar. Transition to PM is fruitful because it extends planar zero-field Ising model inference to models embedded on a torus \cite{kasteleyn} and, in fact, on any surface of small (orientable) genus $g$, but with a price of the additional, multiplicative, and exponential in genus, $4^g$, factor in the algorithm's run time~\cite{gallucio}.

A parallel way of reducing the planar zero-field Ising model to a PM problem consists of constructing a so-called expanded dual graph \cite{bieche,barahona,schraudolph-kamenetsky}. 
This approach is more natural and interpretable because there is a one-to-one correspondence between spin configurations and PMs on the expanded dual graph. An extra advantage of this approach is that the reduction allows one to develop an exact efficient sampling. Based on linear algebra and planar separator theory \cite{lipton-tarjan}, Wilson introduced an algorithm \cite{wilson} that allows one to sample PMs over planar graphs in $O(N^\frac32)$ time. The algorithms were implemented in \cite{thomas-middleton1,thomas-middleton2} for the Ising model sampling, however, the implementation was limited to only the special case of a square lattice. In \cite{thomas-middleton1} a simple extension of the Wilson's algorithm to the case of bounded genus graphs was also suggested, again with the $4^g$ factor in complexity.
Notice that imposing zero field condition is critical, as otherwise, the Ising model over a planar graph is  NP-hard \cite{barahona}. On the other hand, even in the case of zero magnetic field Ising models over general graphs are difficult \cite{barahona}.

\textbf{Contribution.} In this manuscript, we discuss tractability related to the Ising model with zero magnetic fields over graphs more general than planar. Our construction is related to graphs characterized in terms of their excluded minor property.
Planar graphs are characterized by excluded $K_5$ minor and $K_{33}$ minor (Wagner's theorem \cite{diestel}, Chapter 4.4). Therefore, instead of attempting to generalize from planar to graphs embedded into surfaces of higher genus, it is natural to consider generalizations associated with a family of graphs excluding $K_5$ minor or $K_{33}$ minor. 

In this manuscript, we show that $K_{33}$-free zero-field Ising models are tractable in terms of inference and sampling and give a tight asymptotic bound, $O(N^\frac32)$, for both operations. For that purpose, we use graph decomposition into triconnected components---the result of recursive splitting by pairs of vertices, disconnecting the graph.
Indeed, the $K_{33}$-free graphs are simple to work with because their triconnected components are either planar or $K_5$ graphs \cite{hall}.
Therefore, the essence of our construction is to decompose the inference task in Ising over a $K_{33}$-free graph into a sequential dynamic programming evaluation over planar or $K_5$ graphs in the spirit of \cite{straub}. Notice that the triconnected classification of the tractable zero-field Ising models is complementary to the aforementioned small genus classification.  We illustrate the difference between the two classifications with an explicit example of a tractable problem over a graph with genus growing linearly with graph size. 

\textbf{Structure.} The manuscript is organized as follows. Sections \ref{sec:def} and \ref{sec:prob}, respectively, establish notations and pose problems of inference and sampling. Section \ref{sec:planar} presents transition from the zero-field Ising model to an equivalent tractable perfect matching (PM) model. This provides a description of a $O(N^\frac32)$ inference and sampling method in planar models, which is new (to the best of our knowledge), and it sets the stage for what follows.
Section \ref{sec:dynprog} discusses a scheme for polynomial inference and sampling in zero-field models over graphs with triconnected components that are either planar or of $O(1)$ size. Section \ref{sec:k33} applies this scheme to $K_{33}$-free zero-field Ising models, resulting in tight asymptotic bounds, which appear to be equivalent to those in the planar case.
Section \ref{sec:imp} describes benchmarks justifying correctness and efficiency of our algorithm.
Technical proofs of statements given throughout the manuscript can be found in the supplementary material. 

\section{Definitions and Notations} \label{sec:def}

Let $V = \{ v_1, ..., v_{| V |} \}$ be a finite set of \textit{vertices}, a multiset $E$ consisting of $e \subseteq V$, $| e | = 2$ be \textit{edges}, then we call $G = (V, E)$ a \textit{graph}. We call $G$ \textit{normal}, if $E$ is a set (i.e., there are no multiple edges in $G$).


A \textit{tree} is a connected graph without cycles.
For $V' \subseteq V$, let $G(V')$ denote a graph $(V', \{ \{ v, w \} \in E \, | \, v \in V', w \in V' \})$.
Let $H = (V_H, E_H)$ be a graph. Then $H$ is a \textit{subgraph} of $G$, if $V_H \subseteq V, E_H \subseteq E$.
Vertex $v \in V$ is an \textit{articulation point} of $G$, if $G(V \setminus \{ v \})$ is disconnected. $G$ is \textit{biconnected} if there are no articulation points in $G$. \textit{Biconnected component} is a maximal subgraph of $G$ without an articulation point.

The graph $G$ is \textit{planar} if it can be drawn on a plane without edge intersections. The corresponding drawing is referred to as \textit{planar embedding} of $G$. When no ambiguity arises, we do not distinguish planar graph $G$ from its embedding.

A set $E' \subseteq E$ is called a \textit{perfect matching} (PM) of $G$, if edges of $E'$ are disjoint and their union equals~$V$. 
$\text{PM}(G)$ denotes the set of all PMs of $G$. 
$K_p$ denotes a complete (normal) graph on $p$ vertices, and $K_{33}$ denotes a utility graph. \textit{Triple bond} is a graph of two vertices and three edges between them. \textit{Multiple bond} is a graph of two vertices and at least three edges between them.


\section{Problem Setup} \label{sec:prob}

Let $G = (V, E)$ be a normal graph, $| V | = N$. For each $v \in V$, define a random binary variable (a spin) $s_v \in \{ -1, +1 \}$, $S = (s_{v_1}, ..., s_{v_N})$.  Subscript $i$ will be used as shorthand for $v_i$, for brevity, thus  $S=(s_1, ..., s_N)$. For each $e \in E$, define a \textit{pairwise interaction} $J_e \in \mathbb{R}$. We associate assignment $X = (x_1, ..., x_N) \in \{ -1, +1 \}^N$ to vector $S$ with probability as follows:
\begin{equation}
\mathbb{P} (S = X) = \frac{1}{Z} \exp \left( \sum_{e = \{v, w\} \in E} J_e x_v x_w \right),
\label{eq:distr}
\end{equation}
where
\begin{equation*}
Z = \sum_{X \in \{ -1, +1 \}^N} \exp \left( \sum_{e = \{v, w\} \in E} J_e x_v x_w \right).
\end{equation*}

The probability distribution (\ref{eq:distr}) defines the so-called \textit{zero-field (or pairwise) Ising model}, and $Z$ is called the \textit{partition function} (PF) of the zero-field Ising (ZFI) model. Notice that $\mathbb{P}(S = X) = \mathbb{P}(S = -X)$.

Given a ZFI model, our goal is to find $Z$ (\textit{inference}) and draw samples from the model efficiently.

\section{Reducing Planar ZFI Model to PM Model} \label{sec:planar}

In this section, we consider a special case of planar graph $G$ and introduce a transition from the ZFI model to the perfect matching (PM) model on a different planar graph.

We assume that the planar embedding of $G$ is given
(and if not, it can be found in $O(N)$ time \cite{boyer}). We follow \cite{schraudolph-kamenetsky} in constructions discussed in this section.

\subsection{Expanded Dual Graph} \label{subsec:edg}

First, triangulate $G$ by adding new edges $e$ to $E$ such that $J_e = 0$. (The triangulation does not change probabilities of the spin assignments.) Graph $G$ is generated (use the same notation as for the original graph for convenience) and is biconnected with every face, including lying on the boundary, forming a triangle. Complexity of the triangulation procedure is $O(N)$, see \cite{schraudolph-kamenetsky} for an example. 

Second, construct a new graph, $G_F = (V_F, E_F)$, where each vertex $f$ of $V_F$ is a face of $G$, and there is an edge $e = \{ f_1, f_2 \}$ in $E_F$ if and only if $f_1$ and $f_2$ share an edge in $G$. By construction, $G_F$ is planar, and it is embedded in the same plane as $G$, so that each new edge $e = \{ f_1, f_2 \} \in E_F$ intersects the respective old edge. 
Call $G_F$ a \textit{dual graph} of $G$. Since $G$ is triangulated, each $f \in V_F$ has degree 3 in $G_F$. 

Third, obtain a planar graph $G^* = (V^*, E^*)$ and its embedding from $G_F$ by substituting each $f \in V_F$ by a $K_3$ triangle so that each vertex of the triangle is incident to one edge, going outside the triangle (see Figure \ref{fig:expanded_dual} for an illustration). Call $G^*$ an \textit{expanded dual graph} of $G$.

Newly introduced triangles of $G^*$, substituting $G_F$'s vertices, are called \textit{Fisher cities} \cite{fisher}. We refer to edges outside triangles as \textit{intercity edges} and denote their set as $E^*_I$. The set $E^* \setminus E^*_I$ of Fisher city edges is denoted as $E^*_C$. Notice that $e^* \in E^*_I$ intersects exactly one $e \in E$ and vice versa, which defines a bijection between $E^*_I$ and $E$; denote it by $g: E^*_I \to E$. Observe also that $| E^*_I | = | E | \leq 3 N - 6$, where $N$ is the size of $G$. Moreover, $E^*_I$ is a PM of $(V^*,E^*)$, and thus $| V^* | = 2 | E^*_I | = O(N)$. Since $G^*$ is planar, one also finds that $| E^* | = O(N)$. Constructing $G^*$ takes efforts of $O(N)$ complexity.

\begin{figure}[!h]
\centering
\subfigure[]{
  \centering
  \includegraphics[width=0.45\linewidth]{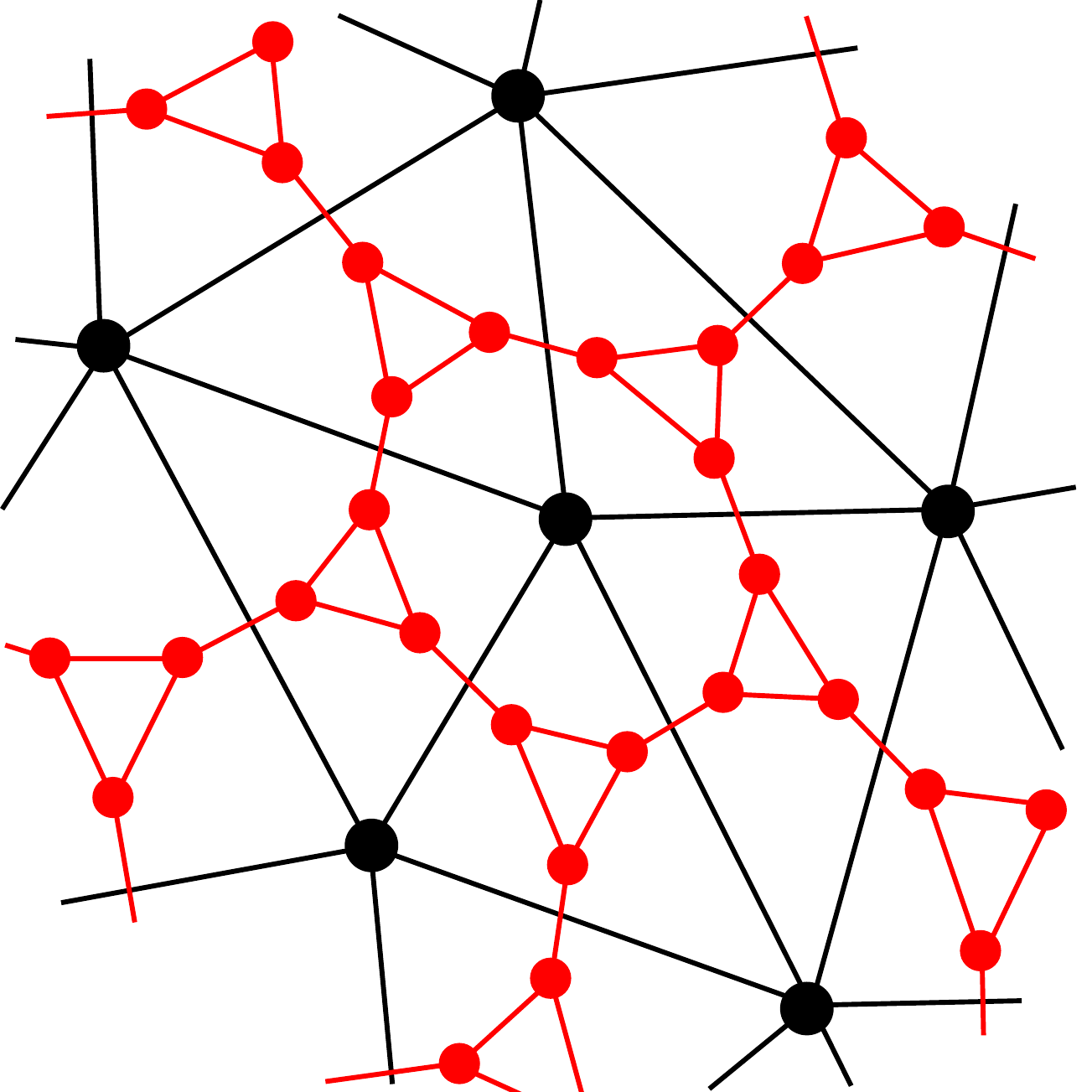}
  \label{fig:expanded_dual}
}%
\subfigure[]{
  \centering
  \includegraphics[width=0.45\linewidth]{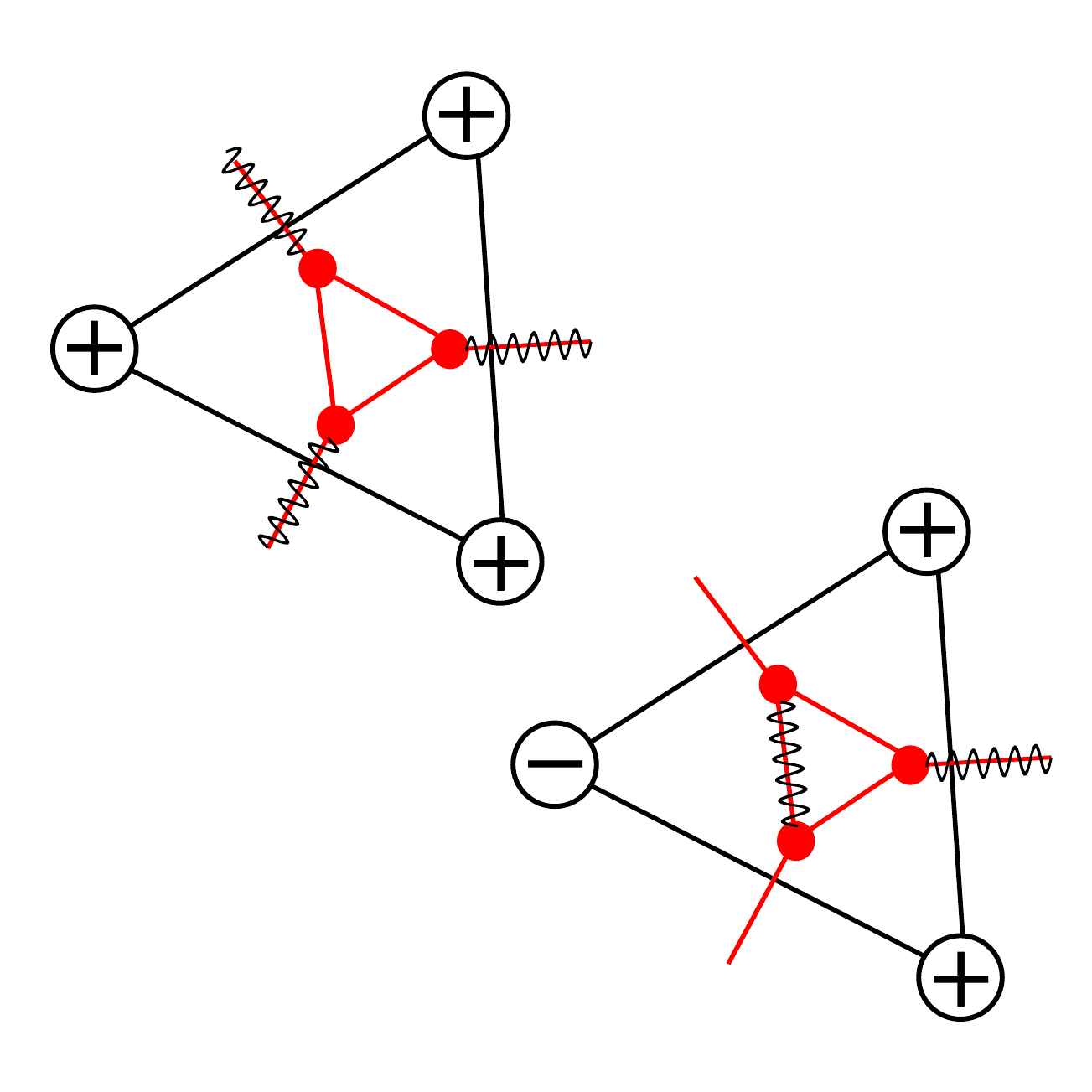}
  \label{fig:pm_cases}
}%
\caption{
(a) A fragment of $G$'s embedding after triangulation (black), expanded dual graph $G^*$ (red). (b) Possible $X$ configurations and corresponding $M(X)$ (wavy lines) on a single face of $G$. Rotation symmetric and reverse sign configurations are omitted.}
\end{figure}

\subsection{Perfect Matching (PM) Model} \label{subsec:pmc}

For $X \in \{ -1, +1 \}^N$, let $I(X)$ be a set $\{ e \in E^*_I \, | \, g(e) = \{ v, w \}, x_v = x_w \}$. Each Fisher city is incident to an odd number of edges in $I(X)$. Thus, $I(X)$ can be uniquely completed to a PM by edges from $E^*_C$. Denote the resulting PM by $M(X) \in \text{PM}(G^*)$ (see Figure \ref{fig:pm_cases} for an illustration). Let $\mathcal{C}_+ = \{ +1 \} \times \{ -1, +1 \}^{N - 1}$. 
\begin{lemma} \label{lemma:bij}
$M$ is a bijection between $\mathcal{C}_+$ and $\text{PM}(G^*)$.
\end{lemma}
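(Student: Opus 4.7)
\textbf{Proof plan for Lemma \ref{lemma:bij}.}

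The plan is to split the argument into three steps: well-definedness of $M$, injectivity, and surjectivity.

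First I would verify that $M$ is well-defined, i.e.\ that $I(X)$ can indeed be completed to a PM by Fisher city edges in a unique way. Consider any triangular face of $G$ with vertices carrying spins $(a,b,c)$: the number of its edges whose endpoints carry equal spins is $3$ if $a=b=c$ and $1$ otherwise. Under the bijection $g$, this is exactly the number of intercity edges from $I(X)$ leaving the corresponding Fisher city, which has $3$ vertices of degree one outside the city. Hence in each Fisher city either all $3$ vertices are covered by $I(X)$ (leaving no city edge to add), or exactly $1$ is covered, leaving a unique pair to be matched by the unique Fisher city edge between them. So $M(X)$ is unambiguously defined and lies in $\mathrm{PM}(G^*)$.

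For injectivity, I would observe that $M(X) \cap E^*_I = I(X)$, so $M(X_1)=M(X_2)$ forces $I(X_1)=I(X_2)$. Through $g$ this records, for every edge $e=\{v,w\}\in E$, whether $x_v=x_w$ or $x_v\neq x_w$. Because $G$ is connected (it is biconnected after triangulation) and the first coordinate is pinned to $+1$ in $\mathcal{C}_+$, propagating sign-changes along any path from $v_1$ to an arbitrary $v_i$ determines $x_i$ uniquely, giving $X_1=X_2$.

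For surjectivity, take $M\in\mathrm{PM}(G^*)$ and set $I = M\cap E^*_I$. The main obstacle is to show that the prescribed ``same/different'' pattern encoded by $I$ is consistent, i.e.\ arises from some spin assignment $X$. I would try to define $X$ by setting $x_1=+1$ and, for each other $v_i$, picking any path from $v_1$ to $v_i$ in $G$ and flipping the sign whenever the path traverses an edge $e$ with $g^{-1}(e)\notin I$. Path-independence is equivalent to the statement that every cycle of $G$ contains an even number of such ``flip'' edges. Since $G$ is planar and triangulated, every cycle is a $\mathbb{Z}_2$-sum of triangular face boundaries, so it suffices to check the parity condition on each triangle. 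But on each Fisher city, an even number of the three city vertices are matched within the city (as Fisher city edges cover pairs), so an odd number of intercity edges from $I$ leave the city; transporting through $g$, an odd number of edges of the triangular face have endpoints with equal spins and hence an \emph{even} number are flip edges. Thus $X$ is well-defined, lies in $\mathcal{C}_+$, and satisfies $I(X)=I$. Uniqueness of the completion to a PM in the well-definedness step then yields $M(X)=M$, completing the proof.
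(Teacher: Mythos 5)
Your proposal is correct and follows essentially the same route as the paper's proof: both establish consistency of the sign assignment by reducing the parity condition on arbitrary cycles to the triangular faces (the paper merges faces via symmetric differences, you invoke the $\mathbb{Z}_2$-span of face boundaries, which is the same argument), and both then read off $X$ by propagating signs along paths from the pinned vertex $v_1$. Your explicit verification of well-definedness of $M$ is a detail the paper handles in the main text just before the lemma rather than in the proof itself.
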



Define weights on $G^*$ according to
\begin{equation*}
\forall e^* \in E^*: c_{e^*} = \begin{cases} \exp (2 J_{g(e^*)}), & e^* \in E^*_I \\ 1, & e^* \in E^*_C \end{cases}
\end{equation*}

\begin{lemma} \label{lemma:zfitopm}
For $E' \in \text{PM}(G^*)$ holds
\begin{equation}
    \mathbb{P} ( M(S) = E' ) = \frac{1}{Z^*} \prod_{e^* \in E'} c_{e^*}, \label{eq:pmprobs}
\end{equation}
where
\begin{equation}
    Z^* = \sum_{E' \in \text{PM}(G^*)} \prod_{e^* \in E'} c_{e^*} = \frac12 Z \exp\left( \sum_{e \in E} J_e\right)
    \label{eq:zstar}
\end{equation}
is the PF of the PM distribution (PM model) defined by (\ref{eq:pmprobs}).
\end{lemma}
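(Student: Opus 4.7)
The plan is to relate the PM weight $\prod_{e^* \in M(X)} c_{e^*}$ directly to the Ising weight $\exp(\sum_e J_e x_v x_w)$ via an algebraic identity, then invoke Lemma~\ref{lemma:bij} together with the $X \leftrightarrow -X$ symmetry to conclude.

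First I would compute the weight of the matching $M(X)$. Since $c_{e^*} = 1$ for $e^* \in E^*_C$, only the intercity edges in $I(X) \subseteq M(X)$ contribute. By definition of $I(X)$, an edge $e^* = g^{-1}(e)$ lies in $I(X)$ exactly when the endpoints $v,w$ of $e = \{v,w\}$ satisfy $x_v = x_w$. Hence $\prod_{e^* \in M(X)} c_{e^*} = \exp\bigl(2 \sum_{e = \{v,w\}:\, x_v = x_w} J_e\bigr)$. Then I would use the elementary identity $J_e x_v x_w = J_e - 2 J_e \cdot \mathbf{1}[x_v \neq x_w]$, or equivalently $2 J_e \mathbf{1}[x_v = x_w] = J_e + J_e x_v x_w$, to rewrite the exponent and obtain
\begin{equation*}
\prod_{e^* \in M(X)} c_{e^*} \;=\; \exp\Bigl(\sum_{e \in E} J_e\Bigr)\,\exp\Bigl(\sum_{e = \{v,w\} \in E} J_e x_v x_w\Bigr).
\end{equation*}

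Next I would handle the probability. By Lemma~\ref{lemma:bij}, every $E' \in \mathrm{PM}(G^*)$ has a unique preimage $X \in \mathcal{C}_+$ under $M$; moreover, $I(X) = I(-X)$ from the definition, so $M(X) = M(-X)$, and the spin flip symmetry of (\ref{eq:distr}) gives $\mathbb{P}(S = X) = \mathbb{P}(S = -X)$. Therefore
\begin{equation*}
\mathbb{P}(M(S) = E') \;=\; \mathbb{P}(S = X) + \mathbb{P}(S = -X) \;=\; \tfrac{2}{Z} \exp\Bigl(\sum_{e = \{v,w\} \in E} J_e x_v x_w\Bigr).
\end{equation*}
Combined with the weight formula above, this yields $\mathbb{P}(M(S) = E') = \tfrac{2}{Z \exp(\sum_e J_e)} \prod_{e^* \in E'} c_{e^*}$, which is exactly (\ref{eq:pmprobs}) with the claimed $Z^*$.

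Finally, to confirm the stated expression for $Z^*$, I would sum the weight identity over $X \in \mathcal{C}_+$ using the bijection of Lemma~\ref{lemma:bij}, and then double to a sum over all of $\{-1,+1\}^N$ by spin-flip symmetry, giving $Z^* = \tfrac12 Z \exp(\sum_e J_e)$. The only non-routine step is the algebraic identification of the PM weight with the Ising weight up to the constant $\exp(\sum_e J_e)$; everything else is bookkeeping via Lemma~\ref{lemma:bij} and the $\pm X$ symmetry.
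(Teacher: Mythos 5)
Your proposal is correct and follows essentially the same route as the paper: both rest on the identity $J_e x_v x_w = 2J_e\mathbf{1}[x_v=x_w]-J_e$, the bijection of Lemma~\ref{lemma:bij} together with the $X\leftrightarrow -X$ symmetry, and the observation that only intercity edges carry nontrivial weight. The paper merely organizes the computation as a single chain starting from $\mathbb{P}(M(S)=E')$ rather than first isolating the weight identity, which is a presentational difference only.
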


Second transition of (\ref{eq:zstar}) reduces the $Z$ computation to solve for $Z^*$. Furthermore, only two equiprobable spin configurations $X'$ and $-X'$ (one of which is in $\mathcal{C}_+$) correspond to $E'$, and they can be recovered from $E'$ in $O(N)$ steps, thus resulting in the statement that one samples from (\ref{eq:distr}) if sampling from (\ref{eq:pmprobs}) is known.


The PM model can be defined for an arbitrary graph $\hat{G} = (\hat{V}, \hat{E}), \hat{N} = | \hat{V} |$ with positive weights $c_e, e \in E'$, as a probability distribution over $\hat{M} \in \text{PM}(\hat{G})$: $\mathbb{P}(\hat{M}) \propto \prod_{e \in \hat{M}} c_e$.



Our subsequent derivations are based on the following:
\begin{theorem} \label{th:pmmodel}
    Given the PM model defined on planar graph $\hat{G}$ of size $\hat{N}$ with positive edge weights $\{ c_e \}$, one can find its partition function and sample from it in $O(\hat{N}^\frac32)$ time.
\end{theorem}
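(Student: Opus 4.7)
The plan is to invoke Kasteleyn's classical reduction from planar perfect-matching weights to a Pfaffian, evaluate that Pfaffian by the planar-separator--based nested dissection of Lipton, Rose and Tarjan, and then handle sampling in the spirit of \cite{wilson} by iteratively conditioning on the presence or absence of carefully chosen edges.

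\textbf{Inference.} Since $\hat{G}$ is planar, a planar embedding and thus a Kasteleyn (Pfaffian) orientation of $\hat{G}$ can be produced in $O(\hat{N})$ time. Let $A$ denote the resulting skew-symmetric $\hat{N}\times\hat{N}$ weight matrix whose $(u,v)$ entry equals $\pm c_{\{u,v\}}$ for $\{u,v\}\in\hat{E}$ and $0$ otherwise. Kasteleyn's theorem gives
\[
\sum_{E'\in\text{PM}(\hat{G})}\prod_{e\in E'}c_e \;=\; |\pfaffian A|,
\]
reducing the partition function computation to evaluating a single Pfaffian, equivalently a determinant, of a matrix with planar sparsity pattern. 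The generalized nested dissection method, powered by the $O(\sqrt{\hat{N}})$ separator theorem of \cite{lipton-tarjan}, produces an elimination ordering that performs this computation in $O(\hat{N}^{3/2})$ total work, yielding the partition function within the advertised time.

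\textbf{Sampling.} For sampling I would follow the scheme of \cite{wilson}. The marginal probability that an arbitrary edge $e=\{u,v\}\in\hat{E}$ belongs to a sample from the PM model is a ratio of two weighted matching sums and can be written in closed form in terms of a single entry of $A^{-1}$ via Jacobi's identity for Pfaffians. Flipping a biased coin accordingly, one either forces $e$ into the matching (and removes $u,v$ from $\hat{G}$) or forbids it (and deletes the edge), reducing the problem to a strictly smaller planar PM instance; iterating such decisions over $O(\hat{N})$ edges produces an exact sample. The sparse $LDL^{T}$-type factorization built during inference supplies the needed inverse entries, and scheduling the edge decisions consistently with the nested-dissection recursion lets most of the factorization be reused after each resolution.

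\textbf{Main obstacle.} The delicate point is not the Pfaffian evaluation, which is a black-box application of nested dissection, but arguing that the whole sequence of factorization updates during sampling sums to $O(\hat{N}^{3/2})$ rather than $\Omega(\hat{N}^2)$. This is precisely what Wilson's recursive scheme accomplishes by sampling edges bottom-up along the separator hierarchy so that, after conditioning on the edges inside a subproblem, only the local portion of the factorization has to be refreshed; amortizing the work across separator levels recovers the desired bound. Checking that the argument of \cite{wilson} extends verbatim to arbitrary positive edge weights $c_e$ reduces to verifying that no sign cancellation in the Pfaffian identity depends on weights being uniform, which is routine.
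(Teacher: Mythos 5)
Your overall route is the same as the paper's: Kasteleyn orientation plus Pfaffian for counting, generalized nested dissection for the determinant, and Wilson's separator-driven recursion for sampling; the paper itself presents these as assembled from \cite{wilson} and \cite{lipton-rose-tarjan} with minor generalizations. However, one step in your inference argument would fail as written: you cannot apply nested dissection as a black box to the skew-symmetric Kasteleyn matrix $A$. The Lipton--Rose--Tarjan bound requires Gaussian elimination \emph{without pivoting} to succeed, i.e.\ all leading principal submatrices in the chosen ordering must be nonsingular, and for $A$ already the first $1\times 1$ leading principal submatrix is zero (a skew-symmetric matrix has zero diagonal). The paper's repair is to first compute one perfect matching $E'$ of $\hat{G}$ (by Blum's algorithm in $O(\hat{N}^{3/2})$ in general, or in $O(\hat{N})$ from a spin configuration for the graphs actually arising here), reorder the vertices so that $E'=\{\{v_1,v_2\},\dots,\{v_{\hat{N}-1},v_{\hat{N}}\}\}$, and swap each odd-indexed column with its successor to obtain $\overline{K}$ with $\det K=|\det\overline{K}|$ and nonsingular leading principal minors. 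A second point you skip: after the column swaps the sparsity pattern of $\overline{K}$ is asymmetric and no longer that of a planar graph, so the nested dissection ordering must be computed on the planar graph $G^{**}$ obtained by contracting the matched edges, and the elimination carried out in $2\times 2$ blocks. Both fixes are needed before the $O(\hat{N}^{3/2})$ claim is legitimate.

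On sampling, your plan is essentially Wilson's algorithm and you correctly flag the amortization of factorization updates as the crux, but two details differ from the actual scheme. First, the algorithm never conditions on an edge being \emph{absent} (your ``forbid it and delete the edge'' branch, which would alter $K$ and force refactorization): it only conditions on inclusion, using $\mathbb{P}(Q\subseteq M)=\sqrt{|\det[K^{-1}]_{i_1,\dots,i_r}^{i_1,\dots,i_r}|}\prod_{e\in Q}c_e$ and, for each unsaturated separator vertex, drawing which of its at most three incident edges covers it. Second, the separator edges are sampled \emph{first} and the recursion descends top-down, not ``bottom-up'': once every vertex of $P_3$ is saturated, the residual problem splits into two independent PM models on $\hat{G}(P_1)$ and $\hat{G}(P_2)$, all linear algebra at one level lives in an $O(\sqrt{\hat{N}})\times O(\sqrt{\hat{N}})$ corner of $\overline{K}^{-1}$, and the recursion $T(\hat{N})=T(|P_1|)+T(|P_2|)+O(\hat{N}^{3/2})$ with $\max(|P_1|,|P_2|)\le\tfrac23\hat{N}$ closes to $O(\hat{N}^{3/2})$.
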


Algorithms, constructively proving the theorem, are directly inferred from \cite{wilson,thomas-middleton1}, with minor changes/generalizations. Hence, we outline them in the supplementary material.

\begin{corollary}
     Inference and sampling of the PM model on $G^*$ (and, hence, the ZFI model on $G$) take $O(N^\frac32)$ time.
\end{corollary}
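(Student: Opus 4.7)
The plan is to chain together the reductions developed in Section \ref{sec:planar} with the black-box guarantee of Theorem \ref{th:pmmodel}, checking that every intermediate step is bounded by $O(N^{\frac32})$.

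First, I would verify that the expanded dual construction fits the hypotheses of Theorem \ref{th:pmmodel}. Starting from the planar graph $G$ on $N$ vertices, we triangulate in $O(N)$ time, build the face graph $G_F$ in $O(N)$ time, and expand each degree-3 face into a Fisher city to obtain the planar graph $G^* = (V^*, E^*)$. The size bounds already noted in Section \ref{subsec:edg} give $|V^*| = 2|E^*_I| = 2|E| \le 6N - 12 = O(N)$ and $|E^*| = O(N)$, and the edge weights $c_{e^*} = \exp(2 J_{g(e^*)})$ or $1$ are strictly positive, matching the hypothesis of the theorem. This preparatory phase therefore costs only $O(N)$, which is absorbed into $O(N^{\frac32})$.

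Next, for inference, I would invoke Theorem \ref{th:pmmodel} on $G^*$ to obtain $Z^*$ in $O(|V^*|^{\frac32}) = O(N^{\frac32})$ time, then recover $Z$ from Lemma \ref{lemma:zfitopm} via
\begin{equation*}
Z \;=\; 2 Z^* \exp\!\left(-\sum_{e \in E} J_e\right).
\end{equation*}
The sum over $E$ and a single exponential cost $O(N)$, so the total inference time is $O(N^{\frac32})$. For sampling, I would draw $E' \in \mathrm{PM}(G^*)$ from the distribution (\ref{eq:pmprobs}) in $O(N^{\frac32})$ time using Theorem \ref{th:pmmodel}. By Lemma \ref{lemma:bij}, $M$ is a bijection between $\mathcal{C}_+$ and $\mathrm{PM}(G^*)$, so I can recover the unique $X' \in \mathcal{C}_+$ with $M(X') = E'$: starting from $x_1 = +1$, the remaining spins can be read off by a BFS/DFS over $G$, deciding at each edge $e$ whether the endpoints agree or disagree according to whether the intercity edge $g^{-1}(e)$ lies in $E'$. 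This traversal costs $O(N)$. Finally, to sample from (\ref{eq:distr}) rather than from $\mathcal{C}_+$, I flip a fair coin and output $X'$ or $-X'$; this is valid because $\mathbb{P}(S=X)=\mathbb{P}(S=-X)$ and exactly one of the two lies in $\mathcal{C}_+$.

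There is essentially no hard obstacle in this corollary: all the heavy lifting has been done in Lemmas \ref{lemma:bij}, \ref{lemma:zfitopm} and Theorem \ref{th:pmmodel}. The only things to check carefully are (i) that $|V^*|$ is linear in $N$, so that the $O(|V^*|^{\frac32})$ bound from the theorem translates to $O(N^{\frac32})$, and (ii) that the pre- and post-processing (triangulation, coin flip, spin reconstruction from a PM) are all linear in $N$ and thus do not dominate the overall complexity.
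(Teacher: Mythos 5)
Your proposal is correct and follows essentially the same route the paper takes: it applies Theorem \ref{th:pmmodel} to $G^*$ after noting $|V^*|=O(N)$, recovers $Z$ from the relation in Lemma \ref{lemma:zfitopm}, and recovers the spin configuration from a sampled PM in $O(N)$ via the bijection of Lemma \ref{lemma:bij} plus a fair coin flip for the global sign. The paper states the corollary without a separate proof precisely because these are the observations made in the surrounding text, so nothing is missing.
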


\section{Dynamic Programming within Triconnected Components} \label{sec:dynprog}

Starting with this section, we present new results. We describe a general algorithm that allows us to perform inference and sampling from the ZFI model in the case where the triconnected components of the underlying graph are either planar or of $O(1)$ size.

\subsection{Decomposition into Biconnected Components} \label{subsec:dec}

Consider a ZFI model (\ref{eq:distr}) over a normal graph $G = (V, E)$, $| V | = N$. If $G$ is disconnected, then distribution (\ref{eq:distr}) is decomposed into a product of terms associated with independent ZFI models over the connected components of $G$. Hence, we assume below, without loss of generality, that $G$ is connected.

Let $G_1, ..., G_h$ be biconnected components of $G$. They form a tree if an edge is drawn between $G_i$ and $G_j$ whenever $G_i$ and $G_j$ share an articulation point. A simple reduction (see supplementary material) shows that inference and sampling on $G$ are reduced to a series of inference and sampling on ZFI models induced by subgraphs $G_1, ..., G_h$.
\begin{lemma} \label{lemma:bic}
    Let $Z_1, ..., Z_h$ be partition functions of ZFI models induced by $G_1, ..., G_h$. Then,
    \begin{equation}
        Z = 2^{-h} Z_1 Z_2 ... Z_h
        \label{eq:bicinf}.
    \end{equation}
    Sampling from $\mathbb{P}(S = X)$ is reduced to a series of sampling on $G_1, ..., G_h$ and $O(N)$ post-processing.
\end{lemma}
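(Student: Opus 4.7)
I would exploit two structural facts. First, because distinct biconnected components share only articulation points and never share edges, the edge set partitions as $E = E_1 \sqcup \cdots \sqcup E_h$, so the Hamiltonian factorizes block-by-block: writing $X_i$ for the restriction of a spin vector $X$ to the vertex set $V_i$ of $G_i$, one has $\exp\left(\sum_{e \in E} J_e x_v x_w\right) = \prod_{i=1}^{h} \exp(H_i(X_i))$, where $H_i$ denotes the Ising energy on $G_i$. Second, each induced block distribution $\mathbb{P}_i$ inherits the $\mathbb{Z}_2$ spin-flip symmetry noted after (\ref{eq:distr}), so every single-spin marginal under $\mathbb{P}_i$ is uniform and the partial sum $\sum_{X_i \,:\, x_v = c} \exp(H_i(X_i))$ equals $Z_i/2$ for either $c \in \{\pm 1\}$.

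\textbf{Inference.} I would prove (\ref{eq:bicinf}) by induction on the block-cut tree $T$ of $G$, whose leaves are exactly the blocks containing a single articulation point. The base case is immediate. For the inductive step, pick a leaf block $G_\ell$ with unique articulation attachment $v$, and use the $\mathbb{Z}_2$ block-sum identity above to sum out the spins in $V_\ell \setminus \{v\}$ while freezing $x_v$; this contributes a factor of $Z_\ell / 2$ and reduces the problem to the partition function $Z'$ of $G(V \setminus (V_\ell \setminus \{v\}))$, whose biconnected decomposition is exactly $\{G_i\}_{i \neq \ell}$. Applying the induction hypothesis to $Z'$ yields (\ref{eq:bicinf}).

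\textbf{Sampling.} The Hamiltonian factorization combined with the inference identity gives $\mathbb{P}(S = X) \propto \prod_i \mathbb{P}_i(X_i)$. Rooting $T$ at an arbitrary block $G_1$ and letting $v_i$ denote the parent articulation point of $G_i$ in $T$ for $i \geq 2$, the block-level $\mathbb{Z}_2$ symmetry lets one rewrite this as $\mathbb{P}(S = X) = \mathbb{P}_1(X_1) \prod_{i \geq 2} \mathbb{P}_i(X_i \mid x_{v_i})$. I would therefore sample by a BFS traversal of $T$: draw $X_1 \sim \mathbb{P}_1$, and for each subsequent block $G_i$ draw an unconditional $\tilde X_i \sim \mathbb{P}_i$ and flip every spin of $\tilde X_i$ if $\tilde x_{v_i}$ disagrees with the already-set value of $x_{v_i}$. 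Correctness follows because flipping all spins preserves $\mathbb{P}_i$ (by $\mathbb{Z}_2$ symmetry), so the adjusted $\tilde X_i$ is distributed as $\mathbb{P}_i(\,\cdot \mid x_{v_i})$; constructing $T$ and carrying out the BFS with sign-fixing is $O(N)$.

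\textbf{Main difficulty.} The $\mathbb{Z}_2$ symmetry does all the real work, so the only delicate point is the combinatorial bookkeeping: one must confirm that no edge is shared between distinct blocks (so the Hamiltonian decomposes exactly) and carefully track the factors of $1/2$ through the induction to recover the stated power of $2$ in (\ref{eq:bicinf}). No analytical obstruction arises beyond this.
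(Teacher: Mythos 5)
Your proof follows essentially the same route as the paper's: factor the Hamiltonian over the edge partition $E = E_1 \sqcup \cdots \sqcup E_h$ (blocks share vertices but never edges), use the global spin-flip symmetry to evaluate the frozen-spin half-sums as $Z_i/2$, and sample block-by-block along the block-cut tree, flipping a freshly drawn $\tilde X_i$ when it disagrees with the already-fixed parent articulation spin; the paper writes this out explicitly for $h=2$ and invokes induction for $h>2$, which is exactly your leaf-peeling argument. One caveat: the induction you describe does not ``recover the stated power of $2$.'' With base case $Z=Z_1$ at $h=1$ and one factor of $Z_\ell/2$ per peeled leaf, you obtain $Z = 2^{1-h}Z_1\cdots Z_h$ --- one factor of $\tfrac12$ per edge of the block-cut tree --- not $2^{-h}$ as in (\ref{eq:bicinf}). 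This agrees with the paper's own computation $Z=\tfrac12 Z_1Z_2$ for $h=2$, so the discrepancy is an off-by-one in the lemma's statement rather than a flaw in your approach, but you should state the constant your induction actually yields instead of asserting that it matches the displayed formula.
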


Observe also that all the articulation points and the biconnected components of $G$ can be found in $O(N + | E |)$ steps \cite{hopcroft1}. Therefore later on, we assume without loss of generality that $G$ is biconnected.

\subsection{Biconnected Graph as a Tree of Triconnected Components} \label{subsec:tree}

In this subsection we follow \cite{hopcroft2,gutwenger}, see also \cite{08Mader} to define the tree of triconnected components. Following discussions of the previous subsection, one considers here a biconnected $G$. 

Let $v, w \in G$. Divide $E$ into equivalence classes $E_1, ..., E_k$ so that $e_1, e_2$ are in the same class if they lie on a common simple path that has $v, w$ as endpoints. $E_1, ..., E_k$ are referred to as \textit{separation classes}. If $k \geq 2$, then $\{ v, w \}$ is a \textit{separation pair} of $G$, unless (a) $k = 2$ and one of the classes is a single edge or (b) $k = 3$ and each class is a single edge. Graph $G$ is called \textit{triconnected} if it has no separation pairs.

\begin{figure*}[!t] \centering
\centering
\includegraphics[width=0.9\linewidth]{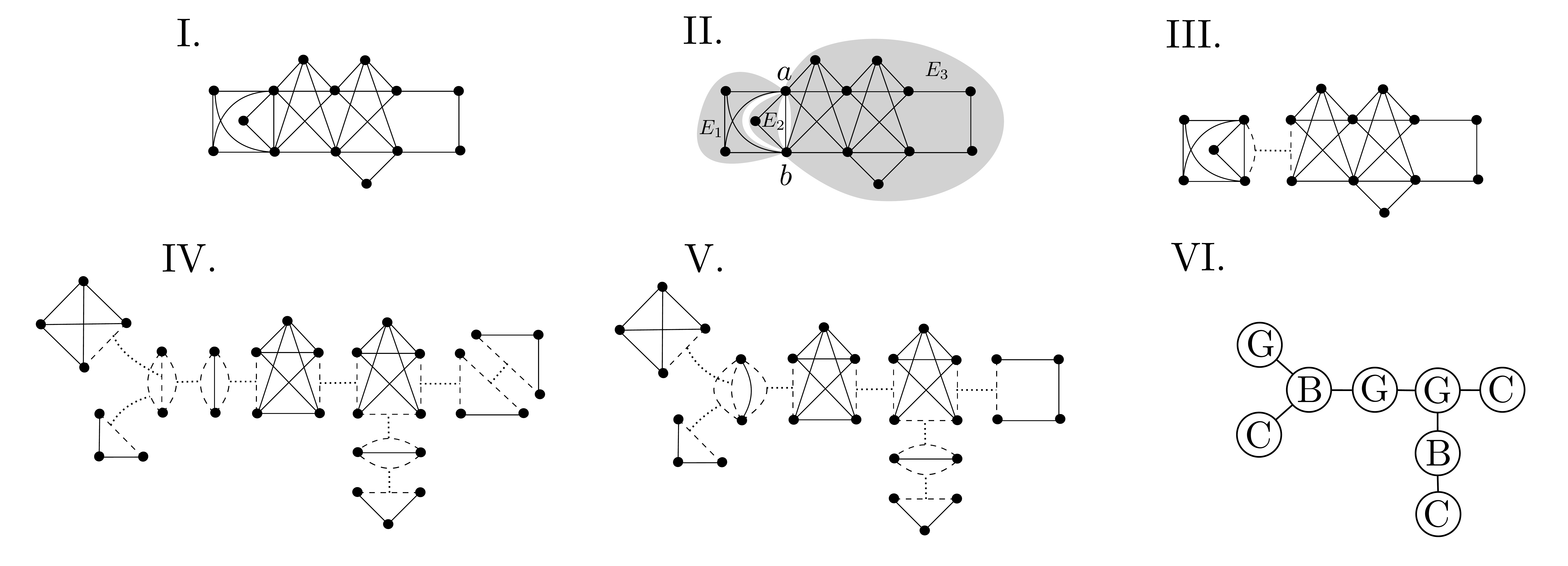}
\caption{(I) An example biconnected graph $G$. (II) A separation pair $\{ a, b \}$ of $G$ and separation classes $E_1, E_2, E_3$ associated with $\{ a, b \}$. (III) Result of split operation with $E' = E_1 \cup E_2, E'' = E_3$. Hereafter, dashed lines indicate virtual edges and dotted lines connect equivalent virtual edges in split graphs. (IV) Split components of $G$ (non-unique). (V) Triconnected components of $G$. (VI) Triconnected component tree $T$ of $G$; spacial alignment of V is preserved. ``G," ``B," and ``C" are examples of the  ``triconnected graph,"  ``multiple bond," and  ``cycle," respectively.}
\label{fig:gr_tree}
\end{figure*}

Let $\{ v, w \}$ be a separation pair in $G$ with equivalence classes $E_1, ..., E_k$. Let $E' = \cup_{i = 1}^l E_l, E'' = \cup_{i = l + 1}^k E_l$ be such that $| E' | \geq 2$, $| E'' | \geq 2$. Then, graphs $G_1 = (\cup_{e \in E'} e, E' \cup \{ e_\mathcal{V} \}), G_2 = (\cup_{e \in E''} e, E'' \cup \{ e_\mathcal{V} \})$ are called \textit{split graphs} of $G$ with respect to $\{ v, w \}$, and $e_\mathcal{V}$ is a \textit{virtual edge}, which is a  new edge between $v$ and $w$, identifying the split operation. Due to the addition of $e_\mathcal{V}$, $G_1$ and $G_2$ are not normal in general.

Split $G$ into $G_1$ and $G_2$. Continue splitting $G_1, G_2$, and so on, recursively, until no further split operation is possible. The resulting graphs are \textit{split components} of $G$. They can either be $K_3$ (triangles), triple bonds, or triconnected normal graphs.

Let $e_\mathcal{V}$ be a virtual edge. There are exactly two split components containing $e_\mathcal{V}$: $G_1 = (V_1, E_1)$ and $G_2 = (V_2, E_2)$. Replacing $G_1$ and $G_2$ with $G' = (V_1 \cup V_2, (E_1 \cup E_2) \setminus \{ e_\mathcal{V} \})$ is called \textit{merging} $G_1$ and $G_2$. Do all possible mergings of the cycle graphs (starting from triangles), and then do all possible mergings of multiple bonds starting from triple bonds. Components of the resulting set are referred to as the \textit{triconnected components} of $G$. We emphasize again that some graphs (i.e., cycles and bonds) in the set of triconnected components are not necessarily triconnected.

\begin{lemma}
\cite{hopcroft2} Triconnected components are unique for $G$. Total number of edges within the triconnected components is at most $3 | E | - 6$.
\label{lemma:3}
\end{lemma}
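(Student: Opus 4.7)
My plan is to prove the two claims separately: uniqueness by an intrinsic characterization of triconnected components that does not depend on the order of split/merge operations, and the edge bound by a counting argument on the triconnected component tree.

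For uniqueness, I would construct a canonical description of the components which is independent of any algorithmic choices. Specifically, I would declare two edges of $E$ equivalent if, for every separation pair of $G$, they lie in the same separation class. This equivalence relation is an intrinsic invariant of $G$. I would then prove by induction on the number of separation pairs that the triconnected components produced by any valid sequence of split operations, followed by the deterministic cycle-merging and bond-merging passes, correspond bijectively to these equivalence classes (each class augmented by the pairs of virtual edges that record the separation pairs enclosing it). The technical heart is a commutativity lemma: if $\{v_1,w_1\}$ and $\{v_2,w_2\}$ are two separation pairs of $G$, then the split components obtained by splitting along them in either order coincide up to relabeling of virtual edges. Once the split components are shown to be unique, the final mergings of cycles and of multiple bonds are rule-driven and therefore force the triconnected components to be unique.

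For the edge bound, let $t$ be the number of triconnected components and let $E_i$ be the edge set of the $i$-th component. The pairs of matching virtual edges assemble the components into a tree on $t$ nodes with $t-1$ edges, each tree edge accounting for exactly two virtual edges (one in each incident component). Hence
\begin{equation*}
\sum_{i=1}^{t} |E_i| \;=\; |E| + 2(t-1).
\end{equation*}
I would then verify that every triconnected component has at least $3$ edges: cycles contain at least a triangle, multiple bonds are triple bonds or denser, and every triconnected normal graph contains a $K_4$ subgraph and therefore has at least $6$ edges. Summing the lower bound gives $3t \leq |E| + 2(t-1)$, so $t \leq |E| - 2$, and substituting back yields $\sum_i |E_i| \leq 3|E| - 6$.

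The main obstacle is the uniqueness half. The split operation is explicitly nondeterministic (any bipartition $(E',E'')$ of the separation classes satisfying $|E'|,|E''|\geq 2$ is allowed), and different choices introduce different intermediate virtual edges, so the invariance is not visible from the procedure itself. The commutativity lemma that underpins the inductive step requires a case analysis depending on whether two separation pairs share zero, one, or two vertices, and on whether a pair remains a separation pair after another pair has been resolved; the cycle- and bond-merging passes are needed precisely to wash out the residual differences between the various split sequences. The edge bound, by contrast, is a short tree-counting computation once uniqueness and the tree structure are in hand.
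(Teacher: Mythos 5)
The paper offers no proof of this lemma at all --- it is imported verbatim from Hopcroft and Tarjan --- so there is nothing in-paper to compare against; your proposal has to stand on its own. Its two halves fare very differently. The edge bound is essentially correct: the components form a tree on $t$ nodes, each of the $t-1$ tree edges contributes one virtual edge to each of its two endpoint components, the real edges of $E$ are partitioned among the components, so $\sum_{i=1}^{t}|E_i| = |E| + 2(t-1)$; every component (cycle, multiple bond, or triconnected graph) has at least $3$ edges, whence $3t \le |E| + 2(t-1)$, i.e.\ $t \le |E|-2$, and substituting back gives $3|E|-6$. (One quibble: a triconnected simple graph need not contain $K_4$ --- $K_{3,3}$ is a counterexample --- but minimum degree $3$ on at least $4$ vertices already forces at least $6$ edges, so your conclusion survives with a corrected justification.)

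The uniqueness half has a genuine gap: the intrinsic invariant you propose does not coincide with the triconnected components. Take $G=C_4$ with vertices $1,2,3,4$. Both $\{1,3\}$ and $\{2,4\}$ are separation pairs, and any two distinct edges of the cycle are put into different separation classes by at least one of them, so your equivalence relation has four singleton classes; yet the unique triconnected component of $C_4$ is the single cycle $C_4$ containing all four edges. The same failure occurs at every cycle node of $T$, precisely because the cycle-merging pass glues together pieces that your relation keeps apart. The intermediate claim ``once the split components are shown to be unique'' also cannot be salvaged: split components are genuinely non-unique (the paper's own Figure~2(IV) is labelled as such; splitting $C_4$ at $\{1,3\}$ versus $\{2,4\}$ produces different pairs of triangles), and uniqueness only emerges \emph{after} the cycle and bond mergings. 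A workable proof (Tutte; Hopcroft--Tarjan) characterizes the merged components directly --- via maximal split pairs / the 3-block structure --- and proves that the final merged decomposition, not the intermediate split decomposition, is independent of all choices; that invariance at the merged level is exactly the hard content your sketch defers, and your commutativity lemma as stated attacks it at the wrong level.
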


Consider a graph $T$, where vertices (further referred to as \textit{nodes} for disambiguation) are triconnected components, and there is an edge between $a$ and $b$ in $T$, when $a$ and $b$ share a (copied) virtual edge.

\begin{lemma}
\cite{hopcroft2} $T$ is a tree.
\end{lemma}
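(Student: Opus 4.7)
The plan is to prove the result inductively on the construction procedure that produces triconnected components. The key invariant I will maintain throughout is: at every stage of the construction, each currently extant virtual edge appears in exactly two of the currently produced components, and any two distinct components share at most one virtual edge. The first part is clear during splitting because each split operation creates exactly one new virtual edge $e_\mathcal{V}$ and places one copy in each of the two resulting split graphs; later splits may relocate a copy into a deeper descendant but cannot alter the total count of two. The second part (no two components sharing two distinct virtual edges) is preserved trivially by splits, since the newly introduced $e_\mathcal{V}$ is fresh.

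Next I would introduce an auxiliary ``split tree'' $T_s$ whose nodes are the components produced so far and whose edges correspond to pairs of components that share a virtual edge. At the start, $T_s$ consists of the single node $G$ and is trivially a tree. Each split operation on a component $C$ replaces the node $C$ by two new nodes $C_1, C_2$ joined by an edge (coming from the fresh virtual edge), while redistributing the pre-existing neighbors of $C$ among $C_1$ and $C_2$ according to which of the two split graphs inherited the corresponding virtual edge copy. This is precisely an edge-subdivision-with-neighbor-partitioning operation on the tree $T_s$ and preserves the tree property. Thus, after all splits are exhausted, $T_s$ is a tree whose nodes are the split components of $G$.

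Finally, the merging phase amounts to a sequence of edge contractions in $T_s$: each merge joins two adjacent nodes along the virtual edge they share, producing a single node that inherits the remaining external virtual edges of both. Edge contraction in a tree yields a tree, so the resulting object $T$, obtained after all mergings of cycles (starting from triangles) and then of multiple bonds (starting from triple bonds), is again a tree. The total number of nodes and edges of $T$ is bounded using Lemma \ref{lemma:3}.

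The main obstacle I anticipate is guaranteeing that the invariant ``two distinct components share at most one virtual edge'' is preserved by merges, because without it a merge could create parallel edges or a self-loop in $T$. I would handle this by contradiction: if merging neighbors $G_1, G_2$ in $T_s$ produced two virtual-edge-sharings with some third component $G_3$, then $G_1, G_2, G_3$ would already have formed a cycle in $T_s$, contradicting the tree property established in the previous step. Hence the invariant persists through the merging phase, and $T$ is a tree in the strict simple-graph sense.
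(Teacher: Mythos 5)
The paper itself gives no proof of this lemma: it is quoted directly from Hopcroft and Tarjan's work on dividing graphs into triconnected components, so there is nothing in the manuscript or its supplement to compare against line by line. Your argument is a correct, self-contained reconstruction of the standard proof, and it is consistent with the paper's own (unproved) remark that merging two triconnected components is equivalent to contracting an edge of $T$. The two load-bearing observations are both present and sound: (i) a split replaces one node of the auxiliary component graph $T_s$ by two adjacent nodes whose neighborhoods partition the old node's neighborhood (each pre-existing virtual edge lies in exactly one separation class, hence lands in exactly one of the two split graphs), and this operation --- the inverse of an edge contraction, i.e.\ a vertex split rather than the ``edge subdivision'' you name it --- preserves treeness starting from the one-node tree $\{G\}$; (ii) each merge is an edge contraction of $T_s$, and your contradiction argument correctly rules out the only way a contraction of a tree could fail to yield a simple tree, namely a pre-existing parallel edge or triangle, neither of which a tree contains. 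The one point worth tightening is the invariant that each virtual edge lives in exactly two components: it is what makes the edge set of $T_s$ well defined at every stage, and you assert it but lean on it implicitly when you say neighbors are ``redistributed''; a one-line induction (the fresh copy pair is created once, and subsequent splits move a copy without duplicating it) closes this. The appeal to Lemma \ref{lemma:3} at the end is unnecessary for treeness and can be dropped. What your approach buys over the paper's is an actual proof; what the citation buys the authors is brevity and the companion uniqueness statement, which your argument does not address and does not need to.
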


\paragraph{Example.} Figure~\ref{fig:gr_tree} illustrates triconnected decomposition of a binconnected graph and intermediate steps towards it.

All triconnected components, and thus $T$, can be found in $O(N + | E |)$ steps \cite{hopcroft2,gutwenger,vo}. Merging of two triconnected components is equivalent to contracting an edge in $T$ (VI on Figure \ref{fig:gr_tree}). 
After all possible mergings, $G$ is recovered.

\subsection{Inference via Dynamic Programming} \label{subsec:inf}

Assume that there is a (small) number $C$ bounding the size of each nonplanar triconnected component. In the following, we present a polynomial time algorithm that computes $Z$ for a given (fixed) $C$. 

First, one finds triconnected components of $G$ and $T$ in $O(N + | E |)$ steps.
Choose a \textit{root} node $d$ in $T$. For any node $a \neq d$ in $T$, let the next node $b$ (on a unique path from $a$ to $d$) be a \textit{parent} of $a$, and $a$ be a \textit{child} of $b$. Nodes, which do not have any children, are called \textit{leaves}. For node $a$, let a \textit{subtree} $T(a)$ denote a subgraph constructed from $a$, its children, grandchildren, and so on.

Our algorithm processes each node once. The node is only processed when all its children have been already processed, so a leaf is processed first and the root is processed last. Let $a = (V_a, E_a), N_a = | V_a |$ be a currently processed node. Let $G^T_a = (V^T_a, E^T_a)$ be a graph obtained by merging all nodes in $T(a)$. If $a$ is a root, then $G^T_a = G$. Since the root is processed last, it outputs the desired PF, Z. Figure \ref{fig:k33free_inf} provides a visualization of a node processing routine which is to be explained.

If $a$ is not a root, let $e_\mathcal{V} = \{ p, t \}$ be a virtual edge shared between $a$ and its parent. The only virtual edge in $G^T_a$ is $e_\mathcal{V}$, and $G^T_a$ without $e_\mathcal{V}$ is a subgraph of $G$. Hence, pairwise interactions are defined for $E^T_a \setminus \{ e_\mathcal{V} \}$. The result of node $a$'s processing is a quantity.
\begin{equation*}
    \pi_{a} (x', x'') = \sum_{\substack{x_p = x', x_t = x'' \\ \forall u \in V^T_a \setminus e_\mathcal{V} : \, x_u = \pm 1}} \exp \biggl( \sum_{\substack{e = \{ v, w \} \\ e \in E^T_a \setminus \{ e_\mathcal{V} \}}} J_e x_v x_w \biggr),
\end{equation*}
where $x',x'' = \pm 1$. Notice that $\pi_{a} (+1, +1) = \pi_{a} (-1, -1)$, $\pi_{a} (+1, -1) = \pi_{a} (-1, +1)$, and hence $\pi_a(x', x'') = \pi_a(x'', x')$.

Processing nodes one by one we notice that the following cases are possible:
\begin{enumerate}[wide, labelwidth=!, labelindent=0pt]
\item \textbf{$a$ is a leaf}. Therefore, there is nothing to merge, and $a = G^T_a = (V_a, E_a)$. If $a$ is nonplanar, find $\pi_{a} (\pm 1, \pm 1)$ by brute force enumeration, completed in $O(1)$ steps. If $a$ is a multiple bond, $\pi_a(\pm 1, \pm 1)$ is found in $O(| E_a |)$ steps.

Assume now that node $a$ is (or corresponds to) a planar, normal graph. Define $J_{e_\mathcal{V}} = 0$ and consider a ZFI model with the probability $\mathbb{P}_a (S_a = X_a)$ defined over graph $a$ with $\{ J_e \, | \, e \in E_a \}$ as pairwise interactions. Let $Z_a$ be the PF of the ZFI model. In the remaining part of this case we will only work with this induced ZFI model, so that one can assume that nodes in $V_a$ are ordered, $V_a = \{ v_1, ..., v_{N_a} \}$, such that $v_1 = p, v_2 = t$. Then, one utilizes the notations $S_a = (s_1, ..., s_{N_a})$ and $X_a = (x_1, ..., x_{N_a}) \in \{ -1, +1 \}^{N_a}$ and derives
\begin{align}
\pi_{a} (x', x'') &= \sum_{X_a = (x', x'', \pm 1, \dots, \pm 1)} \exp \biggl( \sum_{\substack{e = \{ v, w \} \\ e \in E_a}} J_e x_v x_w \biggr) \nonumber\\
&= Z_a \mathbb{P}_a (x_1 = x', x_2 = x'')\label{eq:pi}.
\end{align}
Next, one triangulates $a$ by adding enough edges with zero pairwise-interactions, similar to how it is done in Subsection \ref{subsec:edg}. Assume that $a$ is triangulated, and observe that the right-hand side of Eq.~(\ref{eq:pi}) is not affected. Construct $G^* = (V^*, E^*)$, which is an expanded dual graph of $a$ with $E^*_I, E^*_C$, and $g$ defined as in Subsection \ref{subsec:edg}. Then, define mapping $M: \{ -1, +1 \}^{N_a} \to \text{PM}(G^*)$, weights $c_{e^*}$, and the PF $Z^*$ as in \ref{subsec:pmc}. Denote $e^*_\mathcal{V} = g^{-1} (e_\mathcal{V})$.

According to the definition of $M$,
\begin{align*}
\mathbb{P}_a (x_1 = x_2) &= \mathbb{P}_a (e^*_\mathcal{V} \in M(S_a)) \\
&= \frac{1}{Z^*} \sum_{\substack{E' \in \text{PM}(G^*), \\ \, e^*_\mathcal{V} \in E'}}  \prod_{e^* \in E'} c_{e^*}. 
\numberthis
\label{eq:eqprob}
\end{align*}
Denote $G^*_\mathcal{V} = G^*(V^* \setminus e^*_\mathcal{V})$. We continue the chain of relations/equalities (\ref{eq:eqprob}) observing that
\begin{align*}
\{ E' \in \text{PM}(G^*) \, | \, e^*_\mathcal{V} \in E' \} = \{ E'' \cup \{ e^*_\mathcal{V} \} \, | \, E'' \in \text{PM}(G^*_\mathcal{V}) \}.
\end{align*}
Then one arrives at
\begin{align*}
\mathbb{P}_a (x_1 = x_2) &= \frac{c_{e^*_\mathcal{V}}}{Z^*} \sum_{E'' \in \text{PM}(G^*_\mathcal{V})} \prod_{e^* \in E''} c_{e^*} = \frac{c_{e^*_\mathcal{V}} Z^*_\mathcal{V}}{Z^*},
\end{align*}
where $Z^*_\mathcal{V}$ is a PF of the PM model over $G^*_\mathcal{V}$. Compute $Z^*$ and  $Z_a$ in $O(N_a^\frac32)$ steps, as described in Section \ref{sec:planar}. Since $G^*_\mathcal{V}$ is planar of size $O(N_a)$, $Z^*_\mathcal{V}$ can also be computed in $O(N_a^\frac32)$ steps, as Theorem \ref{th:pmmodel} states. The following relations finalize computation of $\pi_a (\pm 1, \pm 1)$ in $O(N_a^\frac32)$ steps:
\begin{align*}
\pi_a (+1, +1) &=
\frac{Z_a}{2} \mathbb{P}_a (x_1 = x_2) = \frac{Z_a e^*_\mathcal{V} Z^*_\mathcal{V}}{2 Z^*} \\
\pi_a (+1, -1) &=
\frac{Z_a}{2} \mathbb{P}_a (x_1 \neq x_2) = \frac{Z_a}{2} - \pi_a (+1, +1).
\end{align*}

\item \textbf{$a$ is not a leaf, not a root}. Let $c_1, ..., c_q$ be $a$'s children, and $e^i_\mathcal{V} = \{ p^i, t^i \}$ be a virtual edge shared between $c_i$ and $a$, $1 \leq i \leq q$. At this point, we already computed all $\pi_{c_i} (\pm 1, \pm 1)$. Each $\{ p^i, t^i \}$ is a separation pair in $G^T_a$ that splits it into $G^T_{c_i}$ and the rest of $G^T_a$, containing all $G^T_{c_j}$, $j \neq i$. Denote all virtual edges in $a$ as $E_\mathcal{V}$, and then the following relation  holds:
\begin{eqnarray}
\pi_a (x', x'') = \sum_{\substack{x_p = x', x_t = x'', \\ \forall u \in V_a \setminus e_\mathcal{V}: x_u = \pm 1}} \biggl[ \exp\biggl( \sum_{\substack{e = \{ v, w \}\\ e \in E_a \setminus E_\mathcal{V}}} J_e x_v x_w\biggr) \nonumber\\
\cdot \prod_{i = 1}^q \pi_{c_i} (x_{p^i}, x_{t^i}) \biggr].
\label{eq:bellman}
\end{eqnarray}

If $a$ is (or corresponds to) a multiple bond, (\ref{eq:bellman}) is computed trivially in $O(| E_a |)$ steps. Hence, one assumes next that $a$ is a normal graph.

Each $\pi_{c_i} (x', x'')$ is positive, and it essentially only depends on the product $x' x''$, that is, there exist such $A_i, B_i$ that $\log \pi_{c_i} (x', x'') = A_i + B_i x' x''$. Using this relation, one rewrites (\ref{eq:bellman}) as
\begin{eqnarray}
\pi_a (x', x'') = \sum_{\substack{x_p = x', x_t = x'', \\ \forall u \in V_a \setminus e_\mathcal{V}: x_u = \pm 1}} \exp\biggl(\sum_{\substack{e = \{ v, w \} \\ e \in E_a \setminus E_\mathcal{V}}} J_e x_v x_w \nonumber\\
+ \sum_{i = 1}^q B_i x_{p^i} x_{t^i}\biggr) \cdot \exp\biggl( \sum_{i = 1}^q A_i\biggr).
\label{eq:bellman2}
\end{eqnarray}
Denote $J_{e_\mathcal{V}} = 0$, $J_{e^i_\mathcal{V}} = B_i$ for each $1 \leq i \leq q$. Then rewrite (\ref{eq:bellman2}) as
\begin{align}
\pi_a &(x', x'') =  \exp\biggl( \sum_{i = 1}^q A_i \biggr) \nonumber\\
&\cdot  \sum_{\substack{x_p = x', x_t = x'', \\ \forall u \in V_a \setminus e_\mathcal{V}: x_u = \pm 1}} 
\exp\biggl( \sum_{e = \{ v, w \} \in E_a} J_e x_v x_w\biggr). \label{eq:bellman3}
\end{align}
We compute (\ref{eq:bellman3}) by brute force in $O(1)$ steps, if $a$ is nonplanar. If $a$ is normal planar, we once again consider a ZFI model with the probability $\mathbb{P}_a (S_a = X_a)$, defined over $G_a$, where the pairwise weights are $\{ J_e \, | \, e \in E_a \}$, and $Z_a$ is the respective PF. Then applying machinery from Case 1, one derives 
\begin{equation*}
\pi_a (x', x'') = \exp\biggl( \sum_{i = 1}^q A_i \biggr) \cdot Z_a \mathbb{P}_a (x_p = x', x_t = x'')
\end{equation*}
in $O(N_a^\frac32)$ steps.

\item \textbf{$a$ is a root}. Once again, let $c_1, ..., c_q$ be children of $a$, $e^i_\mathcal{V} = \{ p^i, t^i \}$ be a virtual edge shared between $c_i$ and $a$, and $1 \leq i \leq q$, $E_\mathcal{V}$ be the set of virtual edges in $E_a$ (which $a$ shares only with its children). Using considerations similar to those described while deriving Eq.~(\ref{eq:bellman}), one arrives at
\begin{align*}
&Z = \sum_{X \in \{ -1, +1 \}^N} \exp \biggl( \sum_{e = \{v, w\} \in E} J_e x_v x_w \biggr) = \\
&\sum_{\forall u \in V_a: x_u = \pm 1} \biggl[ \exp \biggl( \sum_{\substack{e = \{ v, w \}\\ e\in E_a \setminus E_\mathcal{V}}} J_e x_v x_w \biggr) \cdot \prod_{i = 1}^q \pi_{c_i} (x_{p^i}, x_{t^i}) \biggr].
\end{align*}

Finally, one computes $Z$ similarly to how the $\pi$ values were derived in Case 2. It takes $O(| E_a |)$ steps if $a$ is a multiple bond. Otherwise, one constructs a ZFI model and finds the PF over the respective graphs in either $O(1)$ steps, if the graph is nonplanar, or in $O(N_a^\frac32)$ steps, if $a$ is normal planar.

\end{enumerate}

\begin{figure*}[!h] \centering
\centering
\includegraphics[width=0.95\linewidth]{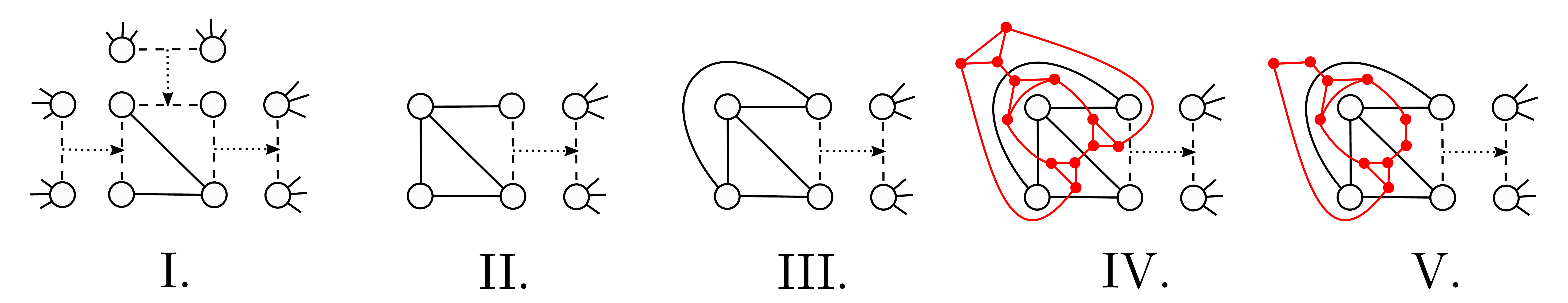}
\caption{Inference. Illustration of a node processing. Arrow indicates a direction to the root. (I) Exemplary node $a$ (subgraph in the center with one solid side edge, one solid diagonal edge, and solid dashed edges, marked according to the rules explained in the captions to Fig.~\ref{fig:gr_tree}), its (two) children and a parent. (II) Topology of the ZFI model defined on $a$. (III) Triangulated ZFI model. (IV) Expanded dual graph $G^*$ of ZFI model (red). Computing PF $Z^*$ of $G^*$'s PMs is a part of the inference processing of the node $a$. (V) $G^*_\mathcal{V}$ graph for $a$ (red). Computing PF $Z^*_\mathcal{V}$ of $G^*_\mathcal{V}$ PMs is a part of the inference processing of node $a$,  unless $a$ is a root.}
\label{fig:k33free_inf}
\end{figure*}

\begin{figure*}[!h] \centering
\centering
\includegraphics[width=0.95\linewidth]{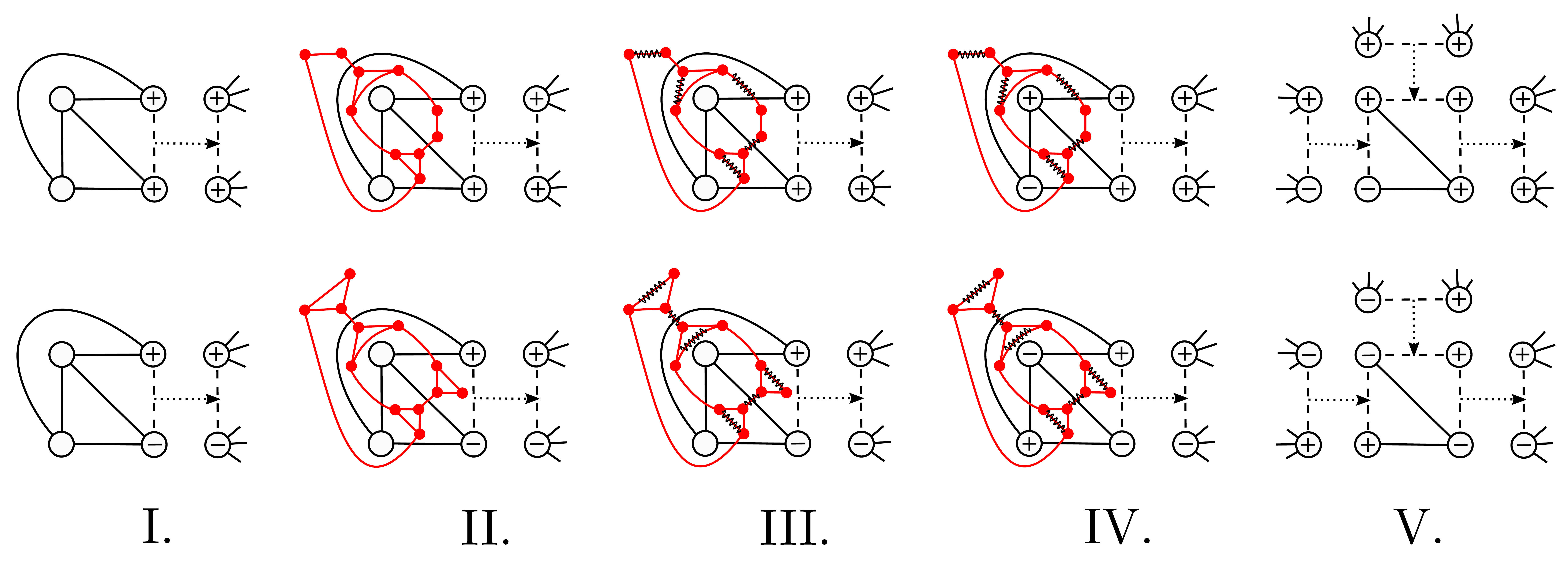}
\caption{Sampling. Illustration of a node processing. General notations (arrows, children, parents, dashed and dotted lines) are consistent with the captions of Figs.~\ref{fig:gr_tree},\ref{fig:k33free_inf}. Assume that spin values at $a$'s parent are already drawn (and consequently, spin values at $e_\mathcal{V}$ are drawn, too). The examples in the top line are for the case of equal spin values at $e_\mathcal{V}$, and the examples in the bottom line are for unequal spin values at $e_\mathcal{V}$. (I) Start with the triangulated ZFI model defined during inference (see Fig.~\ref{fig:k33free_inf}). (II) Find either $G^*_\mathcal{V}$ (top, red) or $\overline{G}^*_\mathcal{V}$ (bottom, red) depending on spin values at $e_\mathcal{V}$. (III) Sample PM on $G^*_\mathcal{V}$ or $\overline{G}^*_\mathcal{V}$. (IV) Set spin values according to PM. (V) Propagate the spin values drawn along the virtual edges towards the child nodes.}
\label{fig:k33free_samp}
\end{figure*}

\begin{figure}[!h]
\centering
  \includegraphics[width=0.5\linewidth]{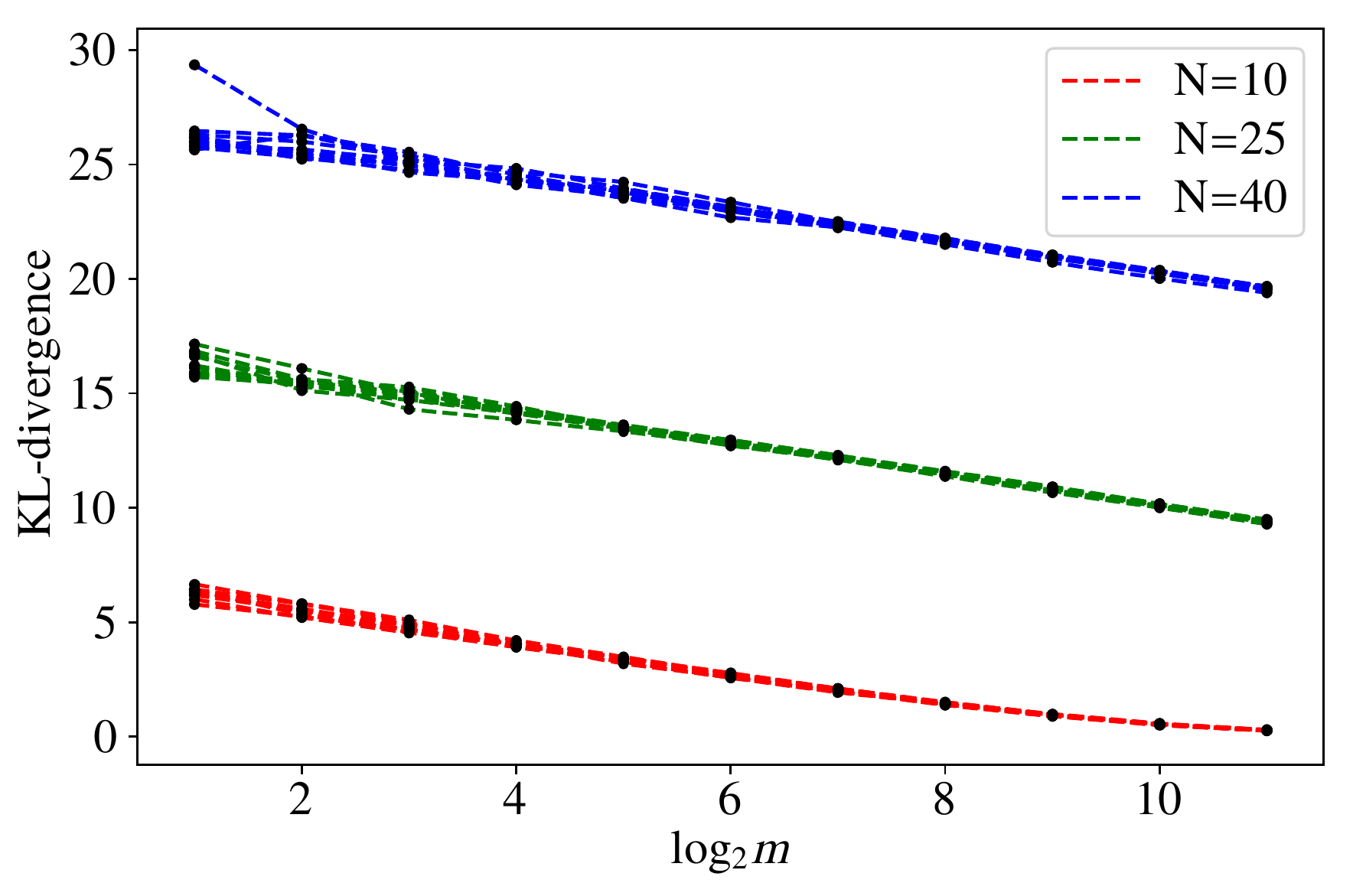}
  \caption{KL-distance of the model probability distribution compared with the empirical probability distribution. $N, m$ are the model's size and the number of samples, respectively.}
  \label{fig:kl}
\end{figure}

\begin{figure}[!h]
\centering
  \includegraphics[width=0.5\linewidth]{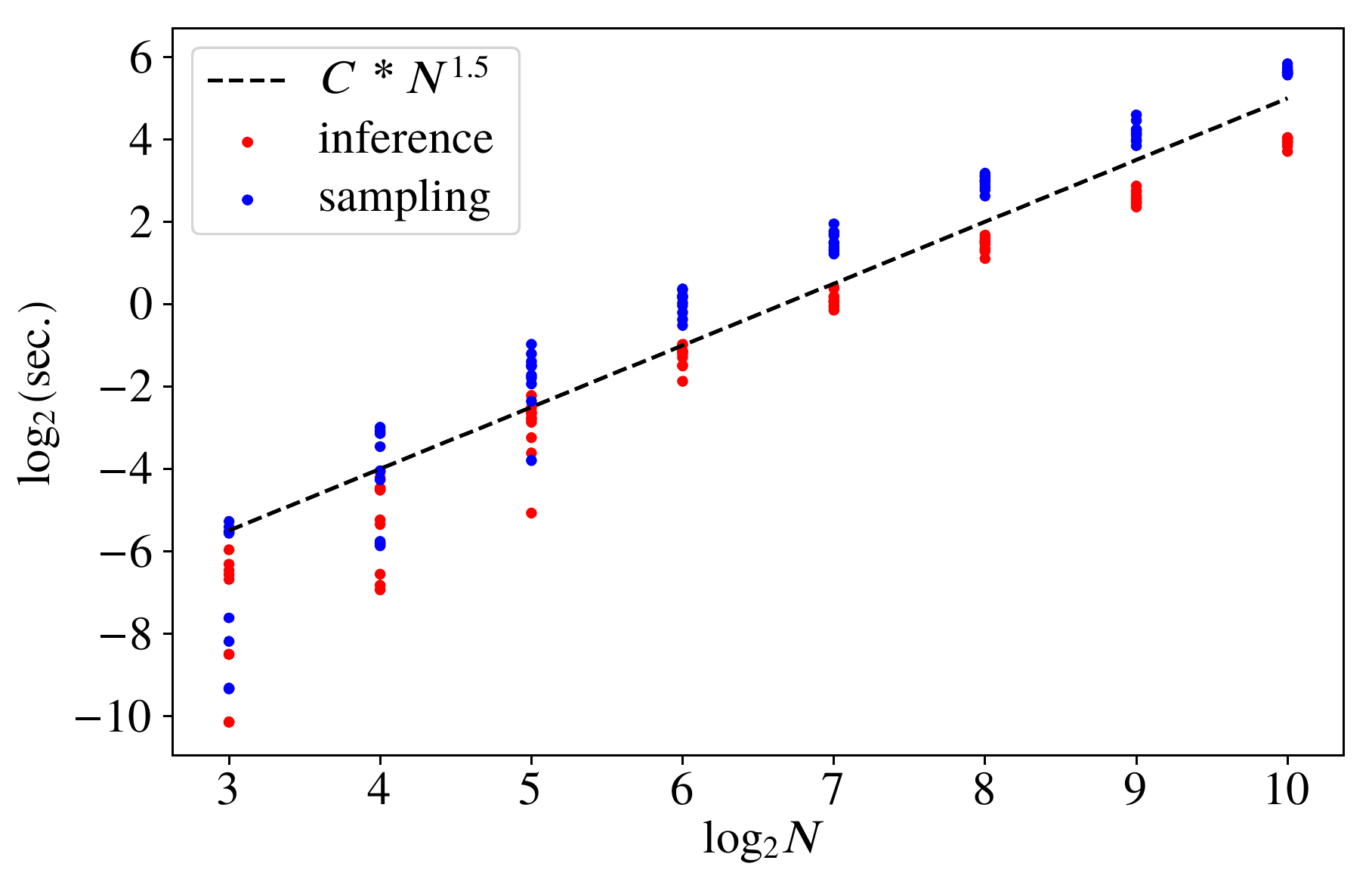}
  \caption{Execution time of inference (red dots) and sampling (blue dots) depending on $N$, shown on a logarithmic scale. Black line corresponds to $O(N^\frac32)$.}
  \label{fig:time}
\end{figure}

\subsection{Sampling via Dynamic Programming} \label{subsec:sample}

The sampling algorithm, detailed below, follows naturally from the inference routine. Compute triconnected components of $G$ in $O(N + | E |)$ steps. If all the triconnected components of $G$ are multiple bonds, $G$ should be a multiple bond itself, but $G$ is normal. Therefore, there exists a component that is not a multiple bond; choose it as a root of $T$.

Use the inference routine (described in the previous Section) to compute $Z$. Now, do a backward pass through the tree, processing the root first, and then processing the node only when its parent has already been processed (Figure \ref{fig:k33free_samp} visualizes the sampling algorithm).

Suppose $a$ is a root and it is processed by now. Since $a$ is not a multiple bond, it results in an Ising model, $\mathbb{P}_a (S_a = X_a)$. Draw a spin configuration $X_a$ from this model. It will take $O(1)$ steps if $a$ is nonplanar or $O(N_a^\frac32)$ steps if $a$ is planar.

Suppose $a$ is not a root. If $a$ is a multiple bond, spin values were already assigned to its vertices (contained within the node/graph $a$). Otherwise, there exists a ZFI model $\mathbb{P}_a (S_a = X_a)$ already constructed at the inference stage. Following the notation of Subsection \ref{subsec:inf}, one has to sample from $\mathbb{P}_a (S_a = X_a | s_p = x_p, s_t = x_t)$, since spins $s_p$ and  $s_t$ are shared with the parent model and have already been drawn as $x_p$ and $x_t$, respectively. If $x_p = x_t$, all valid $X_a$ are such that $e^*_\mathcal{V} \in M(X_a)$, and the task is reduced to sampling PMs on $G^*_\mathcal{V}$. Otherwise, all valid $X_a$ are such that $e^*_\mathcal{V} \notin M(X_a)$. Denote $\overline{G}^*_\mathcal{V} = (V^*, E^* \setminus \{ e^*_\mathcal{V} \})$ and notice that
\begin{equation*}
\{ E' \in \text{PM}(G^*) \, | \, e^*_\mathcal{V} \notin E' \} = \text{PM}(\overline{G}^*_\mathcal{V}).
\end{equation*}
Therefore, the task is reduced to sampling PM over $\overline{G}^*_\mathcal{V}$.

\section{$K_{33}$-free Topology} \label{sec:k33}

\subsection{ZFI Model over $K_{33}$-free Graphs}

Consider the ZFI model (\ref{eq:distr}) over a normal connected graph $G$.
Let $H$ be some graph. Then, $H$ is a \textit{minor} of $G$, if it is isomorphic to $G$'s subgraph, in which some edges are contracted. (See \cite{diestel}, Chapter 1.7, for a formal definition.)

$G$ is \textit{$K_{33}$-free}, if $K_{33}$ is not a minor of $G$, that is, it cannot be derived from $G$'s subgraph by contraction of some edges.

Let a biconnected $G$ be decomposed into the tree of triconnected components. Then, the following lemma holds:
\begin{lemma} \label{th:k33free}
\cite{hall} Graph $G$ is $K_{33}$-free if and only if its nonplanar triconnected components are exactly~$K_5$.
\end{lemma}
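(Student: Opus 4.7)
The plan is to prove both directions by exploiting how $K_{33}$ minors interact with separation pairs. The key auxiliary fact, which the whole argument hinges on, is the following structural correspondence: if $H$ is a 3-connected graph and a minor of $G$, then $H$ is a minor of some single triconnected component of $G$. The intuition is that a separation pair $\{v,w\}$ of $G$ admits only two internally vertex-disjoint $v$--$w$ paths across the split, which is insufficient to witness the 3-connectivity of $H$ spanning both sides; any branch-set family realizing $H$ must therefore lie, up to re-routing through the virtual edge, entirely on one side of every separation pair. Dually, each triconnected component $C$ is itself a minor of $G$: every virtual edge $\{v,w\}$ of $C$ corresponds to a separation pair of $G$, and the biconnectivity of $G$ supplies a $v$--$w$ path through the ``other side,'' which may be contracted to realize that virtual edge.

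For the easy direction ($\Leftarrow$), suppose every nonplanar triconnected component of $G$ is $K_5$, and assume for contradiction that $G$ has a $K_{33}$ minor. Since $K_{33}$ is 3-connected, the auxiliary fact forces some triconnected component $C$ to contain $K_{33}$ as a minor. But a planar $C$ has no $K_{33}$ minor by Wagner's theorem, and $K_5$ cannot have $K_{33}$ as a minor for the trivial cardinality reason that $K_{33}$ has six vertices while $K_5$ has five.

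For the harder direction ($\Rightarrow$), suppose $G$ is $K_{33}$-free and let $C$ be a nonplanar triconnected component. Because $C$ is a minor of $G$, no $K_{33}$ minor can live inside $C$. By Kuratowski--Wagner, $C$ must then contain a $K_5$ minor. The remaining task—and the main obstacle—is the structural claim that \emph{any 3-connected graph with a $K_5$ minor but no $K_{33}$ minor must itself be $K_5$}. I would attack this by fixing branch sets $V_1,\ldots,V_5 \subseteq V(C)$ realizing the $K_5$ minor and showing that any deviation from $C = K_5$ produces a $K_{33}$ minor: if some branch set $V_i$ has more than one vertex, 3-connectivity of $C$ provides three internally disjoint paths from $V_i$ to three of the other $V_j$, which can be split across a partition of $V_i$ to furnish a bipartition witnessing $K_{33}$; if there is a vertex outside $\bigcup_i V_i$, 3-connectivity again supplies three disjoint attachment paths into the branch sets, which re-route the $K_5$-witness into a $K_{33}$-witness. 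A short finite case analysis eliminates the remaining configurations.

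The technical content of that last step is exactly Wagner's 3-connected characterization of $K_{33}$-minor-free graphs, and it is the reason the lemma is non-trivial; in a polished write-up I would either invoke Wagner's theorem directly (the route taken in Hall's original paper) or spell out the branch-set re-routing argument above. Everything else in the proof—the 3-connected-minor-in-a-component principle and the reciprocal realization of each component as a minor of $G$—is a standard consequence of the Hopcroft--Tarjan SPQR decomposition recalled in Section~\ref{subsec:tree}, and I would state and prove these as two short preliminary lemmas before combining them as above.
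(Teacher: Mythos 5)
The paper does not actually prove this lemma anywhere --- it is imported wholesale from Hall's paper via the citation, and the appendix of technical proofs skips it. So there is no ``paper proof'' to match; what you have written is an independent reconstruction, and it is essentially the standard one. Your two preliminary facts are both correct and are exactly the right scaffolding: a $3$-connected minor of $G$ must sit inside a single triconnected component (the usual lemma about $3$-connected minors and separations of order $2$, with the virtual edge playing the role of the added clique edge), and each triconnected component, virtual edges included, is a minor of $G$ because each separation class contains a $v$--$w$ path that realizes the virtual edge. The backward direction then goes through cleanly. The only place where your sketch is genuinely under-specified is the crux of the forward direction: the claim that a $3$-connected nonplanar graph with no $K_{33}$ minor must equal $K_5$. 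Your proposed branch-set surgery (``split $V_i$ across a partition,'' ``a short finite case analysis'') is not a proof as written --- the standard argument for this step (Wagner's theorem on $K_{33}$-minor-free graphs, e.g.\ Diestel Ch.~7.3) proceeds by induction on contractible edges in $3$-connected graphs rather than by direct rerouting, and the case analysis you wave at is where all the difficulty lives. Since you explicitly say you would invoke Wagner's theorem directly in a polished write-up, the argument is acceptable as a citation-backed proof; just be aware that the lemma is, modulo the decomposition bookkeeping, \emph{equivalent} to Wagner's characterization, so ``proving'' it by citing Wagner is the honest and correct move, whereas the rerouting sketch should not be presented as a self-contained alternative without substantially more detail.
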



Therefore, if $G$ is $K_{33}$-free, it satisfies all the conditions needed for efficient inference and sampling, described in Section \ref{sec:dynprog}. According to the lemma, the graph in Fig.~\ref{fig:gr_tree} is $K_{33}$-free. The next statement expresses the main contribution of this manuscript.

\begin{theorem} \label{th:k33comp}
If $G$ is $K_{33}$-free, inference or sampling of (\ref{eq:distr}) takes $O(N^\frac32)$ steps.
\end{theorem}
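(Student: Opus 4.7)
The plan is to invoke Hall's characterization (Lemma \ref{th:k33free}) to reduce the $K_{33}$-free case to the framework of Section \ref{sec:dynprog}, and then carefully add up the per-node costs to get the $O(N^{3/2})$ bound.

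First, by Lemma \ref{lemma:bic}, it suffices to handle each biconnected component of $G$ separately: if the sizes are $N_1,\dots,N_h$ with $\sum_i N_i = O(N)$, and we can process each in $O(N_i^{3/2})$ steps, then the total cost is $O\!\left(\sum_i N_i^{3/2}\right)=O(N^{3/2})$ by the same convexity argument used below. So assume $G$ is biconnected. Build the triconnected-component tree $T$ in $O(N+|E|)$ time \cite{hopcroft2,gutwenger,vo}. By Lemma \ref{th:k33free}, every nonplanar triconnected component is a copy of $K_5$, hence of size $5=O(1)$. Thus $G$ satisfies the hypothesis of Section \ref{sec:dynprog} with $C=5$, and the dynamic programming schemes of Subsections \ref{subsec:inf} and \ref{subsec:sample} apply and yield both the partition function and a sample.

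Next I would bound the total cost of the dynamic program. For a node $a$ of $T$ with vertex set $V_a$ and edge set $E_a$, processing takes $O(1)$ if $a$ is a $K_5$, $O(|E_a|)$ if $a$ is a multiple bond, and $O(N_a^{3/2})$ if $a$ is a planar normal graph (by Theorem \ref{th:pmmodel} applied to the expanded dual graphs $G^*$ and $G^*_\mathcal{V}$ in Case 1, to the modified planar ZFI model in Case 2, and likewise at the root in Case 3; sampling in Subsection \ref{subsec:sample} is dominated by the same PM sampling cost). The total runtime is therefore
\begin{equation*}
O(N+|E|) + \sum_{a\text{ bond}} O(|E_a|) + \sum_{a\text{ } K_5} O(1) + \sum_{a\text{ planar}} O(N_a^{3/2}).
\end{equation*}

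The main obstacle is showing that $\sum_{a} N_a^{3/2} = O(N^{3/2})$, i.e., that the triconnected decomposition does not blow up the total work. I would proceed in two steps. Step one: $K_{33}$-free graphs are sparse, $|E| = O(N)$; this follows from Lemma \ref{th:k33free} by noting that each planar triconnected component has at most $3N_a-6$ edges and each $K_5$ has $10$, combined with Lemma \ref{lemma:3} bounding $\sum_a |E_a|\le 3|E|-6$, which closes a short induction on the tree. Step two: for any triconnected component (or multiple bond) $a$ with $N_a\ge 3$, every vertex has degree $\ge 2$ inside $a$ and triconnected normal graphs have minimum degree $\ge 3$, so in all relevant cases $N_a = O(|E_a|)$. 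Therefore
\begin{equation*}
\sum_a N_a \;=\; O\!\Big(\sum_a |E_a|\Big)\;=\;O(|E|)\;=\;O(N).
\end{equation*}
Since every $N_a\le N$, Hölder (or a crude $N_a^{3/2}\le \sqrt{N}\cdot N_a$) yields $\sum_a N_a^{3/2}\le \sqrt{N}\sum_a N_a = O(N^{3/2})$. Combined with the linear overhead, this proves the claimed $O(N^{3/2})$ bound for both inference and sampling.
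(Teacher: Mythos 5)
Your proposal is correct and follows essentially the same route as the paper's proof: reduce to biconnected and then triconnected components, invoke Hall's characterization (Lemma \ref{th:k33free}) to get $C=5$, bound the per-node costs exactly as in Section \ref{sec:dynprog}, use Lemma \ref{lemma:3} to control the total size of the components so that $\sum_a N_a = O(N)$, and conclude via $\sum_a N_a^{3/2} \le \sqrt{N}\sum_a N_a$ (the paper instead uses superadditivity/convexity of $x^{3/2}$, which is equivalent here). The only substantive difference is that the paper obtains the sparsity bound $|E|=O(N)$ by directly citing the general result that minor-free graphs are sparse, whereas your sketched ``short induction on the tree'' is the one step that would need care if spelled out (multiple bonds do not obey a $3N_a-6$ edge bound, and Lemma \ref{lemma:3} bounds the component edges \emph{in terms of} $|E|$, which is the wrong direction for deriving $|E|=O(N)$) --- but the fact itself is standard and citable, so this does not affect the validity of the argument.
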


We point out that the family of models for which the algorithm from Section \ref{sec:dynprog} applies is broader than just $K_{33}$-free models. However, we focus on $K_{33}$-free graphs because they have a fortunate characterization in terms of a missing minor.

\subsection{Discussion: Genus of $K_{33}$-free Graphs}

A remarkable feature of $K_{33}$-free models is related to considerations addressing the graph's genus.  \textit{Genus} of a graph is a minimal genus (number of handles) of the orientable surface that the graph can be embedded into. Kasteleyn \cite{kasteleyn} has conjected that the complexity of evaluating the PF of a ZFI model embedded in a graph of genus $g$ is exponential in $g$. The result was proven and detailed in 
 \cite{regge,gallucio,cimasoni1,cimasoni2}.
One naturally asks what are genera of graphs over which the ZFI models are tractable.  The following statement relates biconnectivity and graph topology (genus):  
\begin{theorem}
\cite{battle} A graph's genus is a sum of its biconnected component genera. 
\end{theorem}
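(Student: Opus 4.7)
The plan is to establish the equality by proving both inequalities and proceeding by induction on the number $h$ of biconnected components of $G$. The base case $h = 1$ is trivial since $G = G_1$. For the inductive step, pick a leaf $B$ of the block tree of $G$, and let $v$ be the unique articulation point of $G$ contained in $B$. Write $G = G' \cup B$, where $V(G') \cap V(B) = \{v\}$ and $E(G') \cap E(B) = \emptyset$. The biconnected components of $G'$ are precisely the remaining $h-1$ blocks, so by the inductive hypothesis $\gamma(G')$ equals the sum of their genera. It therefore suffices to prove the 1-sum additivity identity $\gamma(G) = \gamma(G') + \gamma(B)$, which is the real content of the theorem.

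For the upper bound $\gamma(G) \leq \gamma(G') + \gamma(B)$, I would fix cellular embeddings of $G'$ in an orientable surface $S'$ of genus $\gamma(G')$ and of $B$ in $S_B$ of genus $\gamma(B)$. Remove a small open disk around $v$ from each surface; the boundary circle meets the edges of $G'$ (resp.\ of $B$) at $v$ in the cyclic order prescribed by the rotation. Glue the two punctured surfaces along their boundary circles, identifying the two copies of $v$, so that the resulting rotation at $v$ lists all $G'$-edges first and then all $B$-edges. An Euler-characteristic computation (or viewing the construction as a connected sum with an added identification point) shows that the resulting surface has genus $\gamma(G') + \gamma(B)$ and carries an embedding of $G$.

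The lower bound $\gamma(G) \geq \gamma(G') + \gamma(B)$ is the nontrivial direction. I would begin with a minimum-genus embedding of $G$ in an orientable surface $S$ of genus $g = \gamma(G)$. The rotation system at $v$ lists all edges at $v$ in some cyclic order that may interleave $G'$-edges with $B$-edges. The key step is to argue that, without increasing the genus, one may assume that all edges of $G'$ at $v$ appear consecutively in this rotation (and hence so do all edges of $B$). Granted this, the $G'$- and $B$-edges at $v$ are separated by a simple closed curve $C$ on $S$ passing once through $v$, with $G'$ on one side and $B$ on the other. Cutting $S$ along $C$ and capping each of the two resulting boundary circles with a disk produces closed orientable surfaces $S_1, S_2$ carrying embeddings of $G'$ and $B$ respectively. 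Capping off two boundary circles adds two faces, so $\chi(S) + 2 = \chi(S_1) + \chi(S_2)$, whence $g = \mathrm{genus}(S_1) + \mathrm{genus}(S_2) \geq \gamma(G') + \gamma(B)$.

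The main obstacle is the rearrangement claim used at the start of the lower-bound argument: that a minimum-genus embedding can be taken to have block-sorted rotations at every articulation point. I would establish it by a local cut-and-paste near $v$: if two maximal arcs of $G'$-edges are separated in the rotation at $v$ by a maximal arc of $B$-edges, one can perform a surgery on a small neighborhood of $v$ that swaps them; a bookkeeping argument (e.g., tracking face traces before and after the swap and applying Euler's formula) shows that this operation never increases the genus. Iterating finitely many such swaps block-sorts the rotation at $v$. This surgery lemma is the only technically delicate ingredient; everything else is either inductive bookkeeping or the standard Euler-characteristic calculation.
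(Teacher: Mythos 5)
The paper does not prove this statement at all: it is quoted verbatim from Battle, Harary, Kodama and Youngs (the reference \cite{battle}), so there is no in-paper argument to compare yours against. Your outline is the classical one for that theorem --- reduce by induction on the block tree to additivity over a one-vertex amalgamation $G = G' \cup B$, prove $\leq$ by a connected sum at $v$, and prove $\geq$ by surgery on a minimum-genus embedding --- and the upper bound as you describe it is fine. But the lower bound, which you correctly identify as the real content, has two genuine gaps. First, the block-sorting claim is simply asserted: you say a local swap of two maximal arcs in the rotation at $v$ ``never increases the genus'' by ``a bookkeeping argument,'' but that single sentence is the entire theorem in disguise, and the stepwise version you propose (each individual swap is genus-non-increasing) is a strictly stronger statement than what the standard proof establishes; it is not obvious that every intermediate swap is monotone, only that the fully sorted rotation is no worse than the original. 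Second, even granting a block-sorted rotation at $v$, your cutting step needs a \emph{separating} simple closed curve $C$ through $v$ with $G'$ on one side and $B$ on the other. On a surface of positive genus a simple closed curve need not separate, and a sorted rotation at the single point $v$ does not by itself prevent $G'-v$ and $B-v$ from sharing handles far from $v$; if $C$ fails to separate, cutting and capping yields one surface of genus $g-1$ still containing all of $G$, and your Euler-characteristic identity $\chi(S)+2=\chi(S_1)+\chi(S_2)$ no longer applies.

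The standard repair avoids both issues and needs no surgery at all. From a cellular minimum-genus embedding of $G$ with $F$ faces, restrict the rotation system to $G'$ and to $B$ (delete the other block's edges from the rotation at $v$); this yields cellular embeddings of $G'$ and $B$ with, say, $F_1$ and $F_2$ faces and genera $g_1'\geq\gamma(G')$, $g_2'\geq\gamma(B)$. Analyzing how each facial walk of $G$ decomposes at $v$ into runs of $G'$-edges and runs of $B$-edges gives the combinatorial inequality $F \leq F_1 + F_2 - 1$, and substituting into Euler's formula (using $|V(G)|=|V(G')|+|V(B)|-1$ and $|E(G)|=|E(G')|+|E(B)|$) yields $g \geq g_1' + g_2' \geq \gamma(G')+\gamma(B)$ in one step. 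If you want to keep your architecture, that face count is the lemma you must actually prove; as written, the proposal assumes it.
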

If a graph is not biconnected, its genus can be arbitrarily large, while inference and sampling may  still be tractable in relation to the decomposition technique discussed in Subsection \ref{subsec:dec}. Therefore, it becomes principally interesting to construct tractable biconnected models with large genus.

\begin{lemma} \label{lemma:k33ex}
    A biconnected $K_{33}$-free graph of size $5 n$ can be of genus as big as $n$.
\end{lemma}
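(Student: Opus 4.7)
The plan is to exhibit an explicit family $G_n$ of biconnected $K_{33}$-free graphs on at most $5n$ vertices whose genus is $n$. I would take $n$ disjoint copies $H_1, \dots, H_n$ of $K_5$ and glue them in a chain by iterated $2$-sum: in each intermediate copy $H_i$ ($1 < i < n$) pick two vertex-disjoint edges, one to be identified with an edge of $H_{i-1}$ and one with an edge of $H_{i+1}$; then identify these edge pairs. This yields $5 + 3(n-1) = 3n+2 \leq 5n$ vertices, so the vertex bound is comfortably satisfied (a few edges can be subdivided to hit exactly $5n$ without affecting either the genus or the $K_{33}$-freeness).

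Next I would verify the structural hypotheses of the lemma. Removing any single vertex of $G_n$ leaves the rest connected, because the $2$-sum of two biconnected graphs along an edge is itself biconnected (a direct case analysis on whether the removed vertex lies on the shared edge); so $G_n$ is biconnected by induction on $n$. By construction, the triconnected decomposition of $G_n$ has as its components exactly the $n$ copies of $K_5$ together with multiple bonds and triangles arising from the identified edges; the latter are planar, so by Lemma~\ref{th:k33free} the graph $G_n$ is $K_{33}$-free.

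The crux is the lower bound on the genus. For the upper bound I would embed each $H_i$ on its own torus and splice these tori by tubes running along the shared edges, yielding a genus-$n$ embedding of $G_n$. For the matching lower bound I would invoke the additivity of the genus over $2$-sums of $3$-connected graphs (Decker, Glover, Huneke, 1985): iterated along the chain and using that each $K_5$ is $3$-connected with genus $1$, this gives $\mathrm{genus}(G_n) \geq n$, hence exactly $n$.

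The main obstacle is precisely this additivity step. A direct Euler-formula bound on $G_n$ gives only $\mathrm{genus}(G_n) \geq 1$ independently of $n$, because the edge-to-vertex ratio of the chain stays bounded as $n$ grows. Thus the linear growth of the genus is a genuinely structural statement about how the handles of any embedding must be distributed among the $K_5$ blocks that are separated by $2$-cuts, and it has to be imported from the classical genus-additivity literature rather than obtained from counting arguments alone.
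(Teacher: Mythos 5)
The construction you propose is close in spirit to the paper's but differs in one way that turns out to matter: you glue consecutive copies of $K_5$ by identifying edges (a $2$-sum), whereas the paper hangs the $n$ copies of $K_5$ on the odd edges of a cycle of length $2n$ (a ``necklace''), so that consecutive copies share no vertices at all. Your structural claims are fine --- biconnectivity, the triconnected components being $K_5$'s plus bonds and cycles and hence $K_{33}$-freeness via Lemma~\ref{th:k33free}, the padding by subdivisions, and the genus-$n$ upper bound by tubing tori together.

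The gap is in the lower bound. ``Additivity of the genus over $2$-sums of $3$-connected graphs'' is not a theorem you can import: what Decker, Glover and Huneke actually establish is $\gamma(G_1\cup G_2)\geq\gamma(G_1)+\gamma(G_2)-1$ for a $2$-amalgamation, together with an exact formula in which the defect $-1$ genuinely occurs for some pairs; $3$-connectivity of the pieces does not by itself remove it. Iterating the correct inequality along your chain of $n$ blocks loses one unit per amalgamation and yields only $\gamma(G_n)\geq 1$ --- the same useless bound as the Euler-formula count you rightly dismiss. To recover additivity for your specific chain you would have to verify the cofaciality criterion for each partial chain (for $n=2$ one can argue directly: $K_5-e$ cannot be drawn in a disk with both ends of $e$ on the boundary circle, since adding an outside vertex joined to both ends creates a $K_5$ minor), and that verification is essentially the whole content of the lemma. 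The paper sidesteps the issue entirely: because consecutive $K_5$'s in the necklace are joined only through cycle edges, deleting a single even edge of the cycle leaves a graph whose biconnected components are the $n$ copies of $K_5$ plus bridges; exact additivity of genus over biconnected components (the theorem of \cite{battle} quoted in the paper) gives that graph genus $n$, and monotonicity of genus under edge deletion gives $\gamma(G)\geq n$. If you replace your edge-identifications by paths of length at least two between consecutive $K_5$'s (or adopt the cycle layout outright), your argument closes using only block-additivity and no $2$-sum machinery.
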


From this we conclude that $K_{33}$-free graphs can't be tackled via the bounded-genus approach of \cite{regge,gallucio,cimasoni1,cimasoni2}. This justifies the novelty of our contribution.

\section{Implementation and Tests} \label{sec:imp}

To test the correctness of inference, we generate random $K_{33}$-free models of a given size and then compare the value of PF computed in a brute force way (tractable for sufficiently small graphs) and by our algorithm. We simulate samples of sizes from $\{ 10, ..., 15 \}$ ($1000$ samples per size) and verify that respective expressions coincide.

When testing sampling implementation, we take for granted that the produced samples do not correlate given that the sampling procedure (Section \ref{subsec:sample}) accepts the Ising model as input and uses independent random number generation inside. The construction does not have any memory, therefore, it generates statistically independent samples. To test that the empirical distribution is approaching a theoretical one (in the limit of the infinite number of samples), we draw different numbers, $m$, of samples from a model of size $N$. Then we find Kullback-Leibler divergence between the probability distribution of the model (here we use our inference algorithm to compute the normalization, $Z$) and the empirical probability, obtained from samples. Fig.~\ref{fig:kl} shows that KL-divergence converges to zero as the sample size increases. Zero KL-divergence corresponds to equal distributions.

Finally, we simulate inference and sampling for random models of different size $N$ and observe that the computational time (efforts) scales as $O(N^\frac32)$ (Fig.~\ref{fig:time})\footnote{Implementation of the algorithms is available at \href{https://github.com/ValeryTyumen/planar_ising}{https://github.com/ValeryTyumen/planar\_ising}.}.

\section{Conclusion}
\label{sec:conclusions}

In this manuscript, we compiled results that were scattered over the literature on $O(N^\frac32)$ sampling and inference in the Ising model over planar graphs. To the best of our knowledge,  we are the first to present a complete and mathematically accurate description of the tight asymptotic bounds.


We generalized the planar results to a new class of zero-field Ising models over graphs not containing $K_{33}$ as a minor. In this case, which is strictly more general than the planar case, we have shown that the complexity bounds for sampling and inference are the same as in the planar case.
Along with the formal proof, we provided evidence of our algorithm's correctness and complexity through simulations.
\section*{Acknowledgements}
This work was supported by the U.S. Department of Energy through the Los Alamos National Laboratory as part of LDRD and the DOE Grid Modernization Laboratory Consortium (GMLC). Los Alamos National Laboratory is operated by Triad National Security, LLC, for the National Nuclear Security Administration of U.S. Department of Energy (Contract No. 89233218CNA000001). 




\bibliography{sample}
\bibliographystyle{chicago}

\begin{appendices}

\section{Technical Proofs}

\textbf{Lemma \ref{lemma:bij} proof.} Let $E' \in \text{PM}(G^*)$. Call $e \in E$ \textit{saturated}, if it intersects an edge from $E' \cap E^*_I$. Each Fisher city is incident to an odd number of edges in $E' \cap E^*_I$. Thus, each face of $G$ has an even number of unsaturated edges. This property is preserved, when two faces/cycles are merged into one by evaluating respective symmetric difference. Therefore, one gets that any cycle in $G$ has an even number of unsaturated edges.

For each $i$ define $x_i := -1^{r_i}$, where $r_i$ is the number of unsaturated edges on the path connecting $v_1$ and $v_i$. The definition is consistent due to aforementioned cycle property. Now for each $e = \{ v, w \} \in E$, $x_v = x_w$ if and only if $e$ is saturated. To conclude, we constructed $X$ such that $E' = M(X)$. Such $X$ is unique, because parity of unsaturated edges on a path between $v_1$ and $v_i$ uniquely determines relationship between $x_1$ and $x_i$, and $x_1$ is always $+1$.
\qed

\textbf{Lemma \ref{lemma:zfitopm} proof.} Let  $X' = (x'_1, ..., x'_N) \in \mathcal{C}_+$, $M(X') = E'$. The statement is justified by the following chain of transitions:
\begin{align*}
\mathbb{P} ( M(S) = E' )
&= \mathbb{P} ( S = X') + \mathbb{P} ( S = -X' ) \\
&= \frac2Z \exp \left(\sum_{e = \{ v, w \} \in E} J_e x'_v x'_w \right) \\
&= \frac2Z \exp \left(\sum_{e^* \in E' \cap E^*_I} 2 J_{g(e^*)} - \sum_{e \in E} J_e\right) \\
&= \frac2Z \exp \left(- \sum_{e \in E} J_e\right) \prod_{e^* \in E' \cap E^*_I} c_{e^*} \\
&= \frac2Z \exp \left(- \sum_{e \in E} J_e\right) \prod_{e^* \in E'} c_{e^*} \\
&= \frac{1}{Z^*} \prod_{e^* \in E'} c_{e^*}
\end{align*}

\begin{algorithm}
    \caption{Sampling from $\mathbb{P}(S = X)$}
    \label{alg:bisamp}
\begin{algorithmic}[1]
    \State {\bfseries Input:} A tree of $G_1, ..., G_h$.
    \State Draw $X_1, ..., X_h$ from ZFI models on $G_1, ..., G_h$.
    \State ProcessComponent(1, -1).
    \State Combine $X_1, ..., X_h$ into $X$.
    \State {\bfseries Output:} $X$.
    
    \State
    \State {\bfseries Procedure}{ ProcessComponent}
    \State {\bfseries Input:} index $i$, parent index $p$.
    \State $v$ = articulation point of $G_i$ and $G_p$.
    \If{unequal spins of $X_i$ and $X_p$ at $v$}
    \State $X_i$ := $-X_i$
    \EndIf
    \For{$j$ {\bfseries in} $i$'s neighbors}
    \If{$j \neq p$}
    \State ProcessComponent($j$, $i$).
    \EndIf
    \EndFor
    
    \State {\bfseries end Procedure}
\end{algorithmic}
\end{algorithm}

\textbf{Lemma \ref{lemma:bic} proof.} The Algorithm \ref{alg:bisamp} reduces sampling on $G$ to a series of samplings on $G_1, ..., G_h$.
 
Given the algorithm and inference formula in Lemma \ref{lemma:bic}, the statement is obvious for $h = 1$. Let $h = 2$. Let $v$ be an articulation point shared by $G_1$ and $G_2$. Denote $G_1 = (V_1, E_1)$, $G_2 = (V_2, E_2)$. Without loss of generality assume that $v$ has index $1$ in $V, V_1$ and $V_2$. Let $\mathcal{C}^i_+ = \{ +1 \} \times \{ -1, +1 \}^{| V_i |}$. Then one derives:
\begin{align*}
Z &= 2 \sum_{X \in \mathcal{C}_+} \exp\left( \sum_{e = \{v, w\} \in E} J_e x_v x_w\right) \\
&= 2 \sum_{X \in \mathcal{C}_+} \biggl[ \exp\left(\sum_{e = \{ v, w \} \in E_1} J_e x_v x_w\right) \cdot \exp\left(\sum_{e = \{ v, w \} \in E_2} J_e x_v x_w\right) \biggr] \\
&= 2 \sum_{X_1 \in \mathcal{C}^1_+} \exp\left(\sum_{e = \{ v, w \} \in E_1} J_e x_v x_w\right) \cdot \sum_{X_2 \in \mathcal{C}^2_+} \exp\left(\sum_{e = \{ v, w \} \in E_2} J_e x_v x_w\right) \\
&= \frac12 Z_1 Z_2
\end{align*}
where $Z_i$ is the PF of the ZFI model induced by $G_i$. As far as sampling is concerned, denote by $\mathbb{P}_i (S_i = X_i)$ a probability distribution induced by the $i$-th ZFI model. Then, since $\mathbb{P}_2 (s_1 = x_1) = \frac12$:
\begin{align*}
\mathbb{P}(S = X) &= \frac{1}{Z} \sum_{X \in \mathcal{C}_+} \exp\left( \sum_{e = \{v, w\} \in E} J_e x_v x_w\right) \\
&= 2  \frac{1}{Z_1} \exp\left(\sum_{e = \{ v, w \} \in E_1} J_e x_v x_w \right) \cdot \frac{1}{Z_2} \exp\left(\sum_{e = \{ v, w \} \in E_2} J_e x_v x_w\right) \\
&= 2 \mathbb{P}_1 (S_1 = X_1) \mathbb{P}_2 (S_2 = X_2) \\
&= \mathbb{P}_1 (S_1 = X_1) \frac{\mathbb{P}_2 (S_2 = X_2)}{\mathbb{P}_2 (s_1 = x_1)} \\
&= \mathbb{P}_1 (S_1 = X_1) \mathbb{P}_2 (S_2 = X_2 | s_1 = x_1)
\end{align*}

Assume that a method for sampling $S_i$ from $\mathbb{P}_i$ is available. Then, draw $X_1$ by sampling $S_1$ from $\mathbb{P}_1$. To sample $S_2$ conditional on $s_1 = x_1$ from $\mathbb{P}_2$, draw $X'_2 = (x'_1, ...)$ from $\mathbb{P}_2 (S_2 = X'_2)$. If $x'_1 = x_1$, then $X_2 = X'_2$, otherwise $X_2 = - X'_2$. This is consistent with Algorithm \ref{alg:bisamp}.

For graphs of $h > 2$ the statement of lemma follows naturally by induction.

\textbf{Theorem \ref{th:k33comp} proof.} Since $G$ is normal and minor-free, it holds that $| E | = O(N)$ \cite{thomason}. Find all biconnected components and for each construct a triconnected component tree in $O(N + | E |) = O(N)$.

As described above, the time (number of steps) of inference or sampling is a sum of inference or sampling times of each triconnected component of $G$. Let the set of all $G$'s triconnected components (that is, a union over all biconnected components) to consist of $k_1$ planar triconnected components of size $N_1, ..., N_{k_1}$ with $M^p_1, ..., M^p_{k_1}$ edges respectively, $k_2$ multiple bonds of $M^b_1, ..., M^b_{k_2}$ edges and $k_3$ $K_5$ graphs. Then the complexity of inference or sampling is $O(\sum_{i = 1}^{k_1} N_i^\frac32 + \sum_{i = 1}^{k_2} M^b_i + k_3)$.

The edges of $G$ are partitioned among biconnected components. Inside each biconnected component apply second part of Lemma \ref{lemma:3} to obtain that $\sum_{i = 1}^{k_1} M^p_i + \sum_{i = 1}^{k_2} M^b_i + 10 k_3 = O(| E |) = O(N)$. This gives that $\sum_{i = 1}^{k_2} M^b_i + k_3 = O(N)$ and $\sum_{i = 1}^{k_1} M^p_i = O(N)$. Since triconnected components are connected graphs, we get that $N_i = O(M^p_i)$ for all $1 \leq i \leq k_1$ and hence $\sum_{i = 1}^{k_1} N_i = O(N)$. From convexity of $f(x) = x^\frac32$ it follows that $\sum_{i = 1}^{k_1} N_i^\frac32 = O(N^\frac32)$ and finally that $O(\sum_{i = 1}^{k_1} N_i^\frac32 + \sum_{i = 1}^{k_2} M^b_i + k_3) = O(N^\frac32)$. \qed

\textbf{Lemma \ref{lemma:k33ex} proof.} 
A simple example illustrates that  genus of a biconnected $K_{33}$-free graph can grow linearly with its size. First, notice that $K_5$ is a nonplanar graph, but it can be embedded in toroid (Fig.~\ref{fig:k5emb}),
therefore genus of the graph is unity. Consider a cycle of length $2 n$, enumerate edges in the order of cycle traversal from $1$ to $2 n$. Attach $K_5$ graph to each odd edge of the cycle (see Fig.~\ref{fig:unbgen}). The resulting graph $G$ is of size $5 n$, it is biconnected and $K_{33}$-free (see Figure \ref{fig:triconn}). Remove an arbitrary even edge from the cycle. It results in a graph whose biconnected components are $n$ $K_5$ graphs and $n$ edges, so its genus is $n$. Since edge removal can only decrease genus, we conclude that $G$'s genus is at least $n$.

\begin{figure}[!h]
\centering
\subfigure[]{
  \centering
  \includegraphics[width=0.2\linewidth]{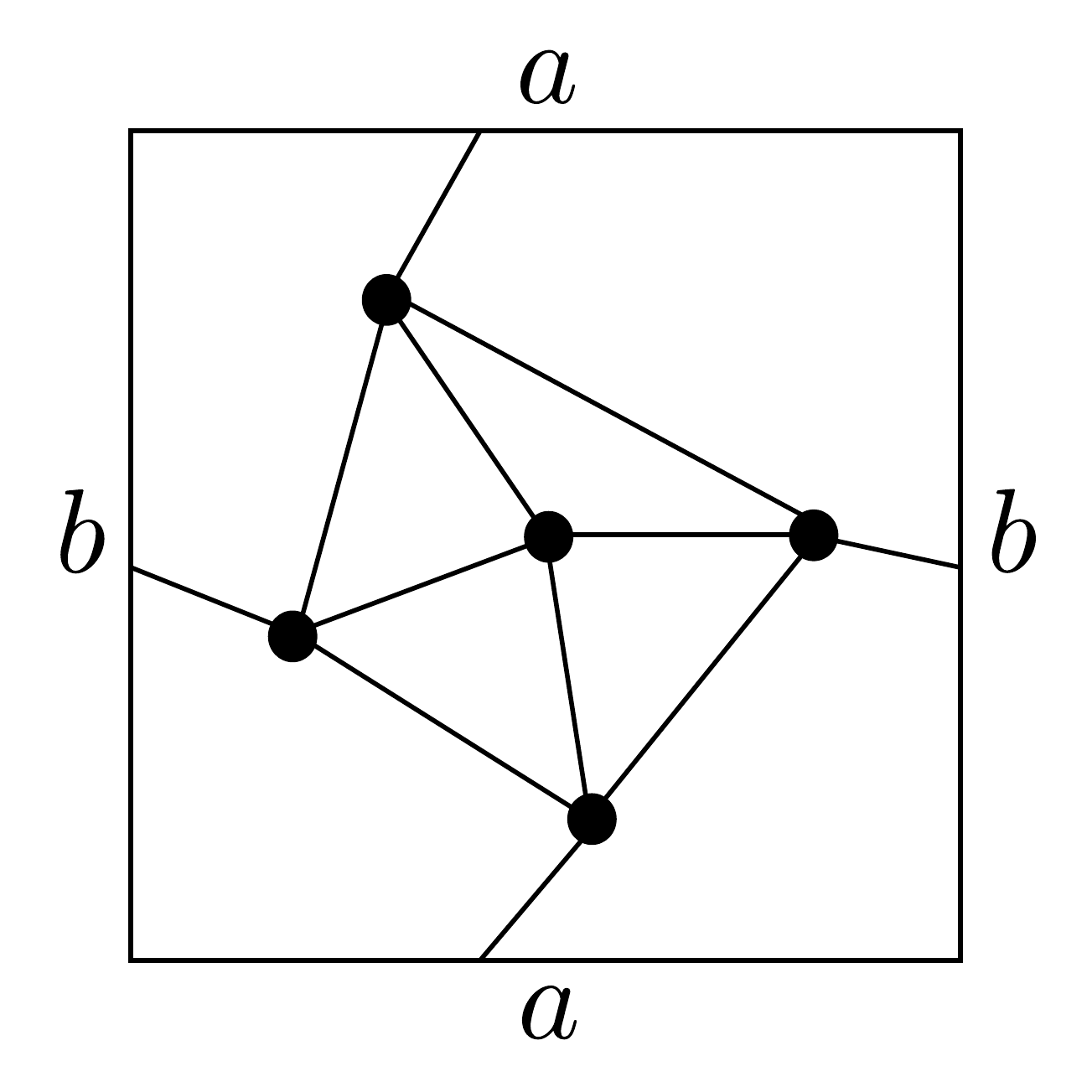}
  \label{fig:k5emb}
}%
\subfigure[]{
  \centering
  \includegraphics[width=0.3\linewidth]{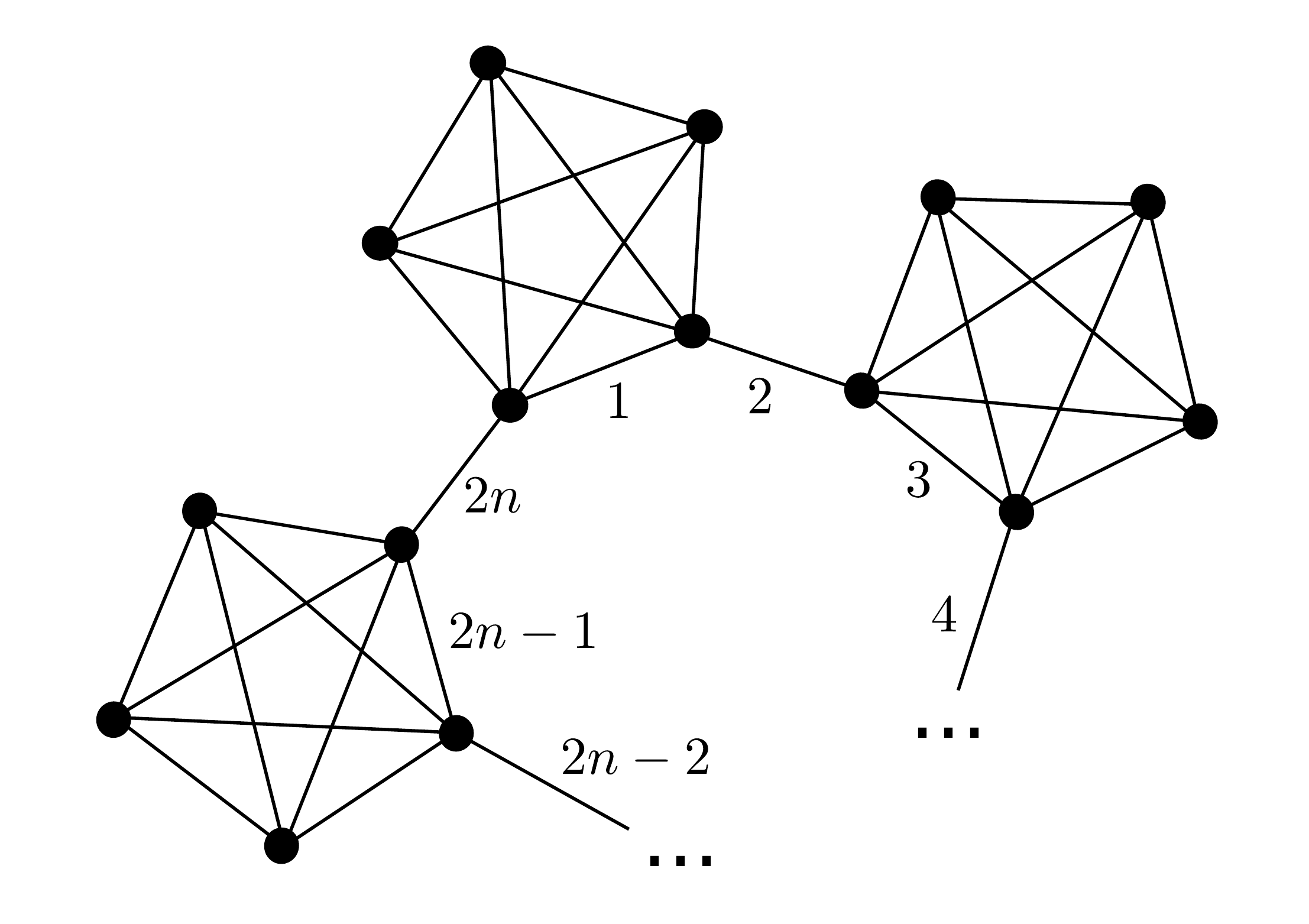}
  \label{fig:unbgen}
}%
\subfigure[]{
  \centering
  \includegraphics[width=0.4\linewidth]{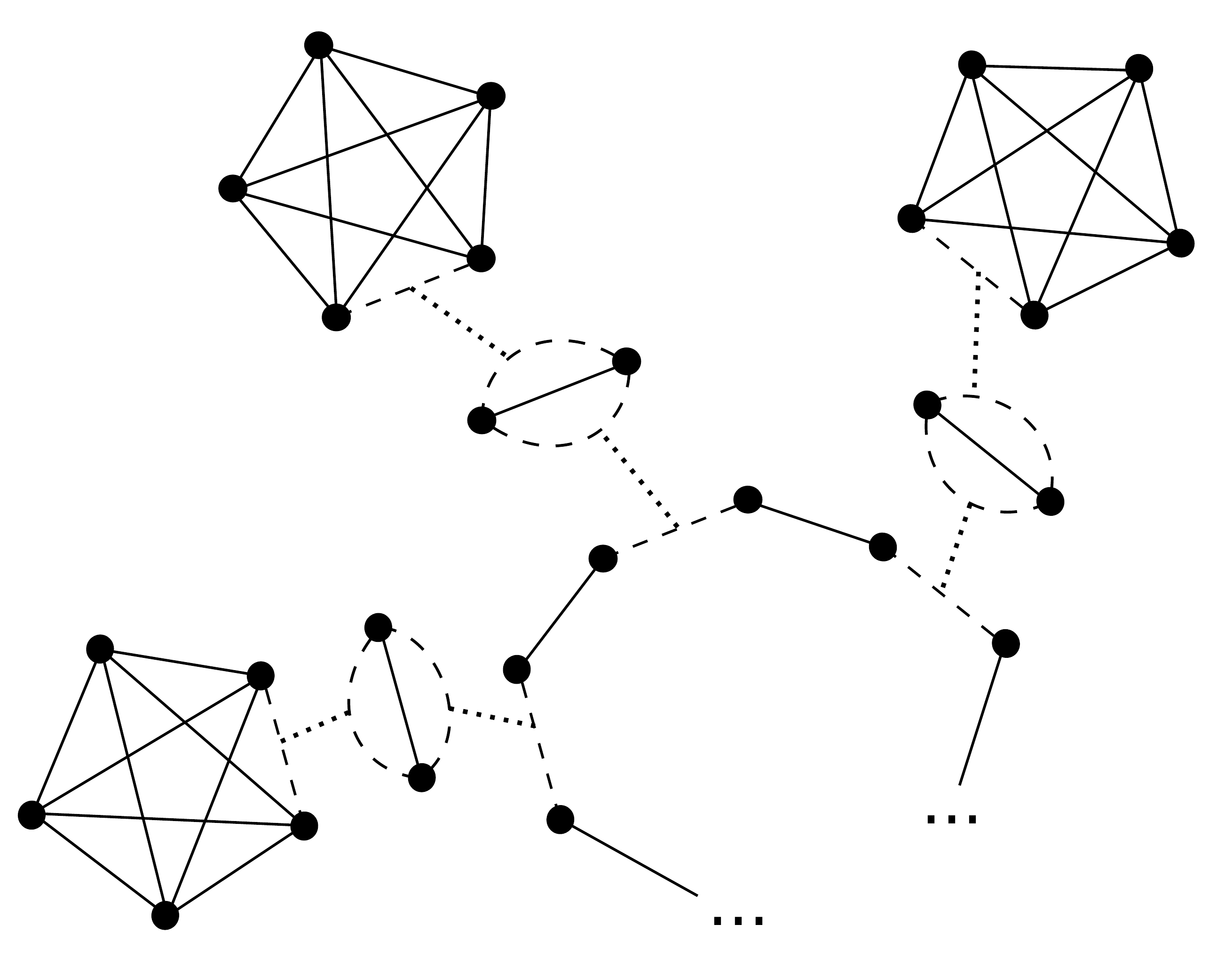}
  \label{fig:triconn}
}%
\caption{(a) $K_5$'s embedding on a toroid - glue sides with the same label together. (b) $G$ - a "necklace" of $n$ $K_5$ graphs. (c) $G$'s triconnected components. Dashed lines are virtual edges and dotted lines identify identical virtual edges. Triconnected components consist of a cycle, triple bonds and $K_5$ graphs. Hence, by Lemma \ref{th:k33free} $G$ is $K_{33}$-free.}
\end{figure}

\section{Counting PMs of Planar $\hat{G}$ in $O(\hat{N}^\frac32)$ time}

This section addresses inference part of Theorem \ref{th:pmmodel}.

\subsection{Pfaffian Orientation} \label{subseq:pf}

Let $\hat{G}$ be an oriented graph. Its cycle of even length (built on an even number of vertices) is said to be \textit{odd-oriented}, if, when all edges along the cycle are traversed in any direction, an odd number of edges are directed along the traversal. An orientation of $\hat{G}$ is called \textit{Pfaffian}, if all cycles $C$, such that $\text{PM}(\hat{G}(\hat{V} - C)) \neq \varnothing$, are odd-oriented.

We will need $\hat{G}$ to contain a Pfaffian orientation, moreover the construction is easy.
\begin{theorem}
Pfaffian orientation of $\hat{G}$ can be constructed in $O(\hat{N})$.
\end{theorem}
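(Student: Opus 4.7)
The plan is to invoke the classical Kasteleyn criterion: for a planar embedding of $\hat{G}$, an orientation in which every bounded face has an odd number of edges directed clockwise along its boundary is automatically a Pfaffian orientation. Granting this criterion, the algorithmic task reduces to satisfying one parity constraint per bounded face, which can be achieved greedily using planar duality.

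Concretely, I would proceed in four linear-time steps. First, obtain a planar embedding of $\hat{G}$ in $O(\hat{N})$ time via the algorithm of \cite{boyer}. Second, compute a spanning tree $T$ of $\hat{G}$ with DFS in $O(\hat{N})$ and orient each edge of $T$ arbitrarily. Third, form the dual of the embedding and delete the vertex corresponding to the outer face; by the planar Euler relation $V - E + F = 2$, the edges outside $T$ number $\hat{E} - (\hat{N} - 1) = F - 1$, exactly the number of bounded faces, and they are dual to a spanning tree $T^*$ of this restricted dual. Fourth, traverse $T^*$ from leaves toward an arbitrarily chosen root, handling one bounded face at a time: when face $f$ is visited, exactly one edge on $\partial f$ is still unoriented (the edge dual to the $T^*$-edge joining $f$ to the unprocessed remainder of $T^*$), so orient it in whichever of the two directions makes the clockwise count on $\partial f$ odd. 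Once the root is processed, every bounded face meets the parity constraint by construction.

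The total work is $O(\hat{N})$: the embedding and spanning trees are linear, and each bounded face is visited once with an $O(1)$ local decision, the total boundary-walk cost being bounded by $\sum_f |\partial f| = 2 \hat{E} = O(\hat{N})$ since planarity forces $\hat{E} = O(\hat{N})$. The main obstacle is justifying the Kasteleyn criterion itself. Suppose $C$ is an even cycle with $\text{PM}(\hat{G}(\hat{V} \setminus C)) \neq \varnothing$, enclosing $k$ bounded faces, $m$ interior edges, and $n$ interior vertices. Summing clockwise-edge counts over the $k$ interior faces (each traversed clockwise), every interior edge contributes to exactly one of its two adjacent face counts, and every edge of $C$ contributes according to the clockwise traversal of $C$ itself; this yields $k \equiv m + (\text{clockwise count on } C) \pmod{2}$, using that each interior-face count is odd. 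Euler's formula applied to the interior subgraph gives $k = m - n + 1$, so the clockwise count on $C$ has parity $n + 1$. Finally, because $C$ separates interior from exterior vertices in the planar embedding, any perfect matching of $\hat{G}(\hat{V} \setminus C)$ must match the $n$ interior vertices among themselves, forcing $n$ to be even; hence the clockwise count on $C$ is odd, and $C$ is odd-oriented as required.
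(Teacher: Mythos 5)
Your argument is essentially correct, but it is far more self-contained than what the paper does: the paper's ``proof'' of this theorem is a one-line pointer to the constructive arguments in Wilson, Vazirani, and Schraudolph--Kamenetsky, whereas you reconstruct the classical Kasteleyn/FKT construction in full --- the face-parity criterion, the two-spanning-trees algorithm, and the Euler-formula parity argument showing that odd clockwise counts on bounded faces force every nice even cycle to be odd-oriented. That last argument (summing clockwise counts over the $k$ enclosed faces to get $k \equiv m + c \pmod 2$, using $k = m - n + 1$ from Euler, and using planarity plus the existence of a perfect matching of $\hat{G}(\hat{V} \setminus C)$ to force $n$ even) is the standard proof and is sound; note only that for an even cycle the clockwise and counterclockwise counts have equal parity, so ``odd clockwise count'' does coincide with the paper's definition of odd-oriented. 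One bookkeeping slip in your Step 3 should be fixed: the $F-1$ co-tree edges are dual to a spanning tree of the \emph{full} dual on all $F$ faces (a spanning tree of the dual with the outer-face vertex deleted would have only $F-2$ edges, contradicting your own count). Correspondingly, the root of $T^*$ cannot be arbitrary --- it must be the outer face, so that every \emph{bounded} face, when reached in leaf-to-root order, still has exactly one unoriented boundary edge (the dual of its parent edge) available to fix its parity; if you rooted at a bounded face, that face would have no remaining degree of freedom. With that correction the construction is the standard linear-time one and the complexity accounting ($\sum_f |\partial f| = 2\hat{E} = O(\hat{N})$) is right.
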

\begin{proof}
This theorem is proven constructively, see e.g. \cite{wilson,vazirani}, or \cite{schraudolph-kamenetsky}, where the latter construction is based on specifics of the expanded dual graph.
\end{proof}

Construct a skew-symmetric sparse matrix $K \in \mathbb{R}^{ \hat{N} \times \hat{N} }$ ($\to$ denotes orientation of edges):
\begin{equation}
K_{ij} = \begin{cases} c_e & \text{if } \{ v_i, v_j \} \in \hat{E}, v_i \to v_j \\ -c_e & \text{if } \{ v_i, v_j \} \in \hat{E}, v_j \to v_i \\ 0 & \text{if } \{ v_i, v_j \} \notin \hat{E} \end{cases}
\label{eq:kdef}
\end{equation}

The next result allows to compute PF $\hat{Z}$ of PM model on $\hat{G}$ in a polynomial time.
\begin{theorem}
$\det K > 0$, $\hat{Z} = \sqrt{\det K}$.
\end{theorem}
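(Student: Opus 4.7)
The plan is to prove this via the classical Pfaffian identity, recognizing that our matrix $K$ is a signed adjacency matrix of $\hat{G}$ with a Pfaffian orientation and positive edge weights. The two ingredients are (i) the combinatorial expansion of the Pfaffian of a skew-symmetric matrix as a signed sum over perfect matchings, and (ii) the algebraic identity $\det K = (\pfaffian K)^2$, valid for any skew-symmetric matrix of even order. Given (i) and (ii), it suffices to show that every term in the Pfaffian expansion of $K$ carries the same sign, and then the theorem reduces to $\pfaffian K = \pm \hat{Z}$, so $\det K = \hat{Z}^2 \geq 0$ and $\sqrt{\det K} = \hat{Z}$. (If $\text{PM}(\hat{G}) = \varnothing$, both sides vanish; otherwise strict positivity follows from the positivity of the edge weights.)

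First I would recall that for skew-symmetric $K \in \mathbb{R}^{\hat{N}\times\hat{N}}$ with $\hat{N}$ even,
\begin{equation*}
\pfaffian K = \sum_{M \in \mathcal{M}} \varepsilon(M) \prod_{\{v_i,v_j\}\in M,\; i<j} K_{ij},
\end{equation*}
where $\mathcal{M}$ is the set of all perfect matchings of the complete graph on $\hat{V}$ and $\varepsilon(M) \in \{\pm 1\}$ is the sign of the canonical permutation associated with $M$. Because $K_{ij}=0$ on non-edges, only $M \in \text{PM}(\hat{G})$ survive, and for each such $M$ the product of $|K_{ij}|$ along $M$ equals $\prod_{e\in M} c_e$. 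Thus
\begin{equation*}
\pfaffian K = \sum_{M \in \text{PM}(\hat{G})} \mathrm{sgn}(M)\,\prod_{e \in M} c_e,
\end{equation*}
where $\mathrm{sgn}(M) \in \{\pm 1\}$ is the combined sign from $\varepsilon(M)$ and the orientation-induced signs in (\ref{eq:kdef}).

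The main step is showing $\mathrm{sgn}(M)$ is independent of $M$. I would fix a reference $M_0 \in \text{PM}(\hat{G})$ and compare with an arbitrary $M \in \text{PM}(\hat{G})$ via the symmetric difference $M \triangle M_0$, which decomposes into a disjoint union of even cycles $C_1,\dots,C_r$ in $\hat{G}$, together with edges shared by $M$ and $M_0$. For each such cycle $C_k$, the set $M_0 \setminus C_k$ restricts to a perfect matching of $\hat{G}(\hat{V}\setminus C_k)$, so $\text{PM}(\hat{G}(\hat{V}\setminus C_k)) \neq \varnothing$. The Pfaffian-orientation hypothesis then forces $C_k$ to be odd-oriented. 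A standard sign-tracking computation (comparing the contribution of $M$ with that of $M_0$, permutation sign together with the per-edge $\pm 1$ from (\ref{eq:kdef})) shows that each cycle $C_k$ contributes a factor that is exactly $+1$ precisely because it is odd-oriented and of even length; hence $\mathrm{sgn}(M) = \mathrm{sgn}(M_0)$ for every $M$. The book-keeping here is the only delicate part, and is the step I expect to be the main obstacle: one has to verify carefully that for a symmetric difference consisting of a single even cycle of length $2\ell$, the ratio of the two permutation signs is $(-1)^{\ell-1}$, while the product of edge-orientation signs differs by $(-1)^{\ell-1}$ whenever the cycle is odd-oriented, so the two effects cancel. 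The general case then follows by composing cycle-swaps.

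With $\mathrm{sgn}(\cdot)$ constant on $\text{PM}(\hat{G})$ and all $c_e > 0$, we obtain
\begin{equation*}
|\pfaffian K| = \sum_{M \in \text{PM}(\hat{G})} \prod_{e \in M} c_e = \hat{Z}.
\end{equation*}
Combining this with $\det K = (\pfaffian K)^2$ gives $\det K = \hat{Z}^2 \geq 0$, and taking the positive square root yields $\hat{Z} = \sqrt{\det K}$; strict positivity $\det K > 0$ follows whenever $\text{PM}(\hat{G}) \neq \varnothing$, which is the nontrivial case of the theorem.
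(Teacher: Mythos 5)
Your proposal is correct and is precisely the classical Kasteleyn argument (Pfaffian expansion over perfect matchings, sign constancy via symmetric-difference cycles that must be odd-oriented, and $\det K = (\pfaffian K)^2$) that the paper itself does not reproduce but simply defers to by citing Wilson and Kasteleyn. The one caveat, which you already flag, is that strict positivity $\det K > 0$ requires $\text{PM}(\hat{G}) \neq \varnothing$; this holds for all graphs to which the theorem is applied in the paper (e.g.\ $E^*_I$ is always a perfect matching of $G^*$).
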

\begin{proof}
See, e.g., \cite{wilson} or \cite{kasteleyn}. 
\end{proof}

\subsection{Computing $\det K$}

LU-decomposition of a matrix $A = LU$, found via Gaussian elimination, where $L$ is a lower-triangular matrix with unit diagonals and $U$ is an upper-triangular matrix, would be a standard way of computing $\det A$, which is then equal to a product of the diagonal elements of $U$. However, this standard way of constructing the LU decomposition applies only if all $A$'s leading principal submatrices are nonsingular (See e.g. \cite{horn-johnson}, Section 3.5, for detailed discussions). And already the first, $1\times 1$,  leading principal submatrix of $K$ is zero/singular. 

Luckily, this difficulty can be resolved through the following construction. Take $\hat{G}$'s arbitrary perfect matching $E' \in \text{PM}(\hat{G})$. In the case of a general planar graph $E'$ can be found via e.g. Blum's algorithm \cite{blum} in $O(\sqrt{\hat{N}} | \hat{E} |) = O(\hat{N}^\frac32)$ time, while for graphs $G^*, G^*_v$ and $\overline{G}^*_v$ appearing in this paper $E'$ can be found in $O(N)$ from a spin configuration using $M$ mapping (e.g. $E' = E^*_I = M(\{ +1, ..., +1 \}) \in \text{PM}(G^*)$). Modify ordering of vertices, $\hat{V} = \{ v_1, v_2, ..., v_{\hat{N}} \}$, so that $E' = \{ \{ v_1, v_2 \}, ..., \{ v_{\hat{N} - 1}, v_{\hat{N}} \} \}$. Build $K$ according to the definition (\ref{eq:kdef}). Obtain $\overline{K}$ from $K$ by swapping column $1$ with column $2$, $3$ with $4$ and so on. This results in $\det K = | \det \overline{K} |$,  where the new $\overline{K}$ is properly conditioned.
\begin{lemma}
$\overline{K}$'s leading principal submatrices are nonsingular.
\end{lemma}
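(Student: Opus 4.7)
The plan is to express each leading principal submatrix of $\overline{K}$ as a column-permuted submatrix of $K$ and then use Pfaffian-based arguments to rule out singularity. Since the column swaps ``$1\leftrightarrow 2,\,3\leftrightarrow 4,\,\dots$'' act in disjoint pairs, the $2k\times 2k$ leading principal submatrix $\overline{K}_{[2k]}$ is obtained from $K_{[2k]}$ by $k$ column transpositions, so $\det\overline{K}_{[2k]}=(-1)^k\det K_{[2k]}$. A $(2k{+}1)\times(2k{+}1)$ leading principal submatrix $\overline{K}_{[2k+1]}$, after reversing $k$ column swaps, is the submatrix of $K$ formed by rows $\{1,\dots,2k+1\}$ and columns $\{1,\dots,2k,2k+2\}$, i.e., the matrix $K_{[2k+2]}$ with row $2k{+}2$ and column $2k{+}1$ deleted.

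The core claim I would then establish is that $\pfaffian(K_{[2j]})\neq 0$ for every $j=1,\dots,\hat N/2$. The matrix $K_{[2j]}$ is the skew-symmetric matrix~(\ref{eq:kdef}) attached to the induced subgraph $\hat G_j:=\hat G(\{v_1,\dots,v_{2j}\})$ with the orientation inherited from $\hat G$. First I verify that this inherited orientation is itself Pfaffian in the sense of the paper: given any cycle $C$ in $\hat G_j$ whose complement $\hat G_j(V_j - C)$ admits a PM $M'$, the set $M'\cup\{\{v_{2j+1},v_{2j+2}\},\dots,\{v_{\hat N-1},v_{\hat N}\}\}$ is a PM of $\hat G(\hat V - C)$, so by the Pfaffian property of the global orientation the cycle $C$ is odd-oriented. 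Kasteleyn's theorem (the same result quoted for $\hat G$) then gives $|\pfaffian(K_{[2j]})|=\sum_{M\in\text{PM}(\hat G_j)}\prod_{e\in M}c_e$, which is strictly positive because $\{\{v_1,v_2\},\dots,\{v_{2j-1},v_{2j}\}\}$ is itself a PM of $\hat G_j$ and all weights $c_e$ are positive.

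It remains to turn the non-vanishing of these Pfaffians into non-singularity of $\overline{K}$'s leading principal submatrices. For even sizes this is immediate: $\det\overline{K}_{[2k]}=(-1)^k\,\pfaffian(K_{[2k]})^2\neq 0$. For odd sizes I would invoke the standard Pfaffian minor identity for skew-symmetric $A$ of even size,
\[
\det\bigl(A[\widehat{i};\widehat{j}]\bigr)=(-1)^{i+j+1}\,\pfaffian(A)\,\pfaffian\bigl(A[\widehat{i},\widehat{j}]\bigr),
\]
where $A[\widehat{i};\widehat{j}]$ is $A$ with row $i$ and column $j$ deleted and $A[\widehat{i},\widehat{j}]$ is $A$ with both rows and columns $i,j$ deleted. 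Applied to $A=K_{[2k+2]}$ with $i=2k{+}2,\;j=2k{+}1$ this gives $\det\overline{K}_{[2k+1]}=\pm\,\pfaffian(K_{[2k+2]})\,\pfaffian(K_{[2k]})\neq 0$.

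The main obstacle I anticipate is the Pfaffian-restriction step: checking cleanly that every cycle of $\hat G_j$ relevant to its Pfaffian condition lifts to a cycle of $\hat G$ satisfying the global condition. This is precisely where the reserved tail edges of the chosen PM $E'$ play their role—they furnish a ready-made completion to a global PM avoiding $C$. Everything else is bookkeeping: tracking signs from column swaps and invoking the Pfaffian minor identity (whose justification is a short cofactor expansion).
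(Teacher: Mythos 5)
Your proof is correct, and it is essentially the standard argument: the paper itself gives no details here, simply citing Wilson and noting that the unit-weight proof extends to positive weights, and what you have written is exactly the content of that omitted argument (the key point being that the tail $\{v_{2j+1},v_{2j+2}\},\dots,\{v_{\hat N-1},v_{\hat N}\}$ of the chosen matching $E'$ lets every nice cycle of the induced subgraph $\hat G_j$ lift to a nice cycle of $\hat G$, so the restricted orientation stays Pfaffian and $\pfaffian(K_{[2j]})\neq 0$). The only cosmetic caveat is the exact sign in the Pfaffian minor identity (which follows from Jacobi's identity for the inverse of a skew-symmetric matrix rather than a bare cofactor expansion), but since you only use it up to sign this does not affect the conclusion.
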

\begin{proof}
The proof, presented in \cite{wilson} for the case of unit weights~$c_e$, generalizes to arbitrary positive~$c_e$.
\end{proof}

Notice, that in the general case (of a matrix represented in terms of a general graph) complexity of the LU-decomposition is cubic in the size of the matrix. 
Fortunately, \textit{nested dissection} technique, discussed in the following subsection, allows to reduce complexity of computing $\hat{Z}$ to $O(\hat{N}^\frac32)$.

\subsection{Nested Dissection} \label{subsec:nd}

The partition $P_1, P_2, P_3$ of set $\hat{V}$ is a \textit{separation} of $\hat{G}$, if for any $v \in P_1, w \in P_2$ it holds that $\{ v, w \} \notin \hat{E}$. We refer to $P_1, P_2$ as the \textit{parts}, and to $P_3$ as the \textit{separator}.

Lipton and Tarjan (LT) \cite{lipton-tarjan} found an $O(\hat{N})$ algorithm, which finds a separation $P_1, P_2, P_3$ such that $\max ( | P_1 |, | P_2 | ) \leq \frac23 \hat{N}$ and $| P_3 | \leq 2^\frac32 \sqrt{ \hat{N} }$. The LT algorithm can be used to construct the so called \textit{nested dissection ordering} of $\hat{V}$. The ordering is built recursively, by first placing vertices of $P_1$, then $P_2$ and $P_3$, and finally permuting indices of $P_1$ and $P_2$ recursively according to the ordering of $\hat{G}(P_1)$ and $\hat{G}(P_2)$ (See \cite{lipton-rose-tarjan} for accurate description of details, definitions and analysis of the nested dissection ordering). As shown in \cite{lipton-rose-tarjan} the complexity of finding the nested dissection ordering is $O( \hat{N} \log \hat{N} )$.

Let $A$ be a $\hat{N} \times \hat{N}$ matrix with a \textit{sparsity pattern} of $\hat{G}$. That is, $A_{ij}$ can be nonzero only if $i = j$ or $\{ v_i, v_j \} \in \hat{E}$.
\begin{theorem}
\cite{lipton-rose-tarjan} If $\hat{V}$ is ordered according to the nested dissection and $A$'s leading principal submatrices are nonsingular, computing the LU-decomposition of $A$ becomes a problem of the $O(N^\frac32)$ complexity.
\end{theorem}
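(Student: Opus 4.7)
The plan is to analyze Gaussian elimination on $A$, whose LU-decomposition exists and is unique by the non-singularity hypothesis, by exploiting the planar-separator structure encoded in the nested-dissection ordering. First I would formalize the recursion tree $T$ implicit in the ordering: each internal node $\mathcal{A}\in T$ corresponds to an induced subgraph of $\hat{G}$ of size $|\mathcal{A}|$ whose Lipton--Tarjan separator $\mathrm{sep}(\mathcal{A})$ has size $O(\sqrt{|\mathcal{A}|})$ and whose two children have size at most $\tfrac{2}{3}|\mathcal{A}|$ each. Each vertex of $\hat{V}$ is assigned to the unique node whose separator contains it, and the ordering places vertices assigned to deeper nodes strictly before those assigned to shallower ones.

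The central combinatorial claim to prove is a fill-in locality invariant: when $v\in\mathrm{sep}(\mathcal{A})$ is about to be eliminated, every non-zero entry in its active row and column corresponds to a still-uneliminated vertex that is assigned either to $\mathcal{A}$ itself or to some ancestor of $\mathcal{A}$ in $T$. This would be shown by induction on the elimination step, using that at every internal node of $T$ the separator disconnects its two child subgraphs; any elimination path $v=u_0,u_1,\dots,u_m=w$ in which $u_1,\dots,u_{m-1}$ are already eliminated cannot enter a sibling subtree, since doing so would force the path to cross an unyet-eliminated separator, contradicting the order. Consequently, fill-in is confined to separator chains up the tree, and the active degree of $v\in\mathrm{sep}(\mathcal{A})$ is controlled by the sizes of $\mathcal{A}$'s ancestor separators.

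Granted the invariant, eliminating $v$ costs $O(d_v^2)$ arithmetic operations, where $d_v$ is its active degree, and the total work is obtained by summing these costs over $v$, grouped by the node $\mathcal{A}$ to which $v$ is assigned. A level-by-level computation on $T$ --- using $|\mathrm{sep}(\mathcal{A})|=O(\sqrt{|\mathcal{A}|})$, the $\tfrac{2}{3}$-balance of children, convexity of $x\mapsto x^{3/2}$, and the identity $\sum_{\mathcal{A}\in T}|\mathrm{sep}(\mathcal{A})|=\hat{N}$ --- produces the claimed $O(\hat{N}^{3/2})$ bound on arithmetic work (together with a $O(\hat{N}\log\hat{N})$ bound on total fill-in). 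The main obstacle is precisely this accounting: a naive uniform bound $d_v=O(\sqrt{\hat{N}})$ yields only $O(\hat{N}^2)$, and the tight estimate requires exploiting that in a planar subgraph the ``ancestor'' contributions to $d_v$ themselves scale as $\sqrt{|\mathcal{A}|}$ rather than $\sqrt{\hat{N}}$, which is where planarity of $\hat{G}$ is really used in the arithmetic bound (beyond merely guaranteeing the small separator at each recursive step).
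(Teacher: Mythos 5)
The paper does not prove this statement at all: it is quoted verbatim from Lipton, Rose and Tarjan, and the manuscript's only ``proof'' is the citation. So the comparison can only be against the argument in that reference, whose skeleton your sketch reproduces correctly: the recursion tree, the assignment of each vertex to the unique separator containing it, the fill-path characterization showing that the active neighbours of $v\in\mathrm{sep}(\mathcal{A})$ lie in $\mathrm{sep}(\mathcal{A})$ or in ancestor separators, the $O(d_v^2)$ per-elimination cost, and the level-by-level summation using $|\mathrm{sep}(\mathcal{A})|=O(\sqrt{|\mathcal{A}|})$ and the $\tfrac23$-balance. That part of your invariant is sound.

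The genuine gap is exactly at the point you flag as ``the main obstacle'' and then do not resolve. Your locality invariant only yields $d_v\le|\mathrm{sep}(\mathcal{A})|+\sum_{\mathcal{B}\succeq\mathcal{A}}|\mathrm{sep}(\mathcal{B})|=O(\sqrt{\hat N})$, hence $O(\hat N^2)$ total work, as you note. The tight bound needs the sharper statement that the \emph{boundary} of each region $\mathcal{A}$ --- the set of already-placed ancestor-separator vertices adjacent to $\mathcal{A}$ --- has size $O(\sqrt{|\mathcal{A}|})$. This is not a consequence of naively recursing the separator theorem on $\hat G(P_1)$ and $\hat G(P_2)$: a small region deep in the recursion can a priori inherit a boundary as large as the sum of all ancestor separators. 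Lipton, Rose and Tarjan obtain the invariant only by modifying the recursion, applying the separator theorem to the region \emph{together with} its inherited boundary (and performing extra splits when the boundary grows too large), so that $|\partial\mathcal{A}|=O(\sqrt{|\mathcal{A}|})$ is maintained as an explicit inductive hypothesis of Algorithm ND. Without constructing that modified ordering and proving the boundary invariant, the claim that ``ancestor contributions to $d_v$ themselves scale as $\sqrt{|\mathcal{A}|}$'' is an assertion of the theorem's hardest ingredient rather than a proof of it; everything downstream (the $O(|\mathcal{A}|^{3/2})$ cost per node and the geometric sum over levels) is correct but conditional on it.
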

Notice, however, that we cannot directly apply the Theorem to $\overline{K}$, because the sparsity pattern of $K$ is asymmetric and does not correspond, in general, to any graph.

Let $G^{**} = (V^{**}, E^{**})$ be a planar graph, obtained from $\hat{G}$, by contracting each edge in $E'$, $| V^{**} | = | E' | = \frac12 \hat{N}$. Find and fix a nested dissection ordering  over $V^{**}$ (it takes $O(\hat{N} \log \hat{N})$ steps) and let the $\{ v_1, v_2 \}, \dots, \{ v_{\hat{N} - 1}, v_{\hat{N}} \}$ enumeration of $E'$ correspond to this ordering. Split $K$ into  $2 \times 2$ cells and consider the sparsity pattern of the nonzero cells. One observes that the resulting sparsity pattern coincides with the sparsity patterns of $\overline{K}$ and $G^{**}$.  Since LU-decomposition can be stated in the $2 \times 2$ block elimination form, its complexity is reduced down to $O(\hat{N}^\frac32)$.

This concludes construction of an efficient inference (counting) algorithm for planar PM model.

\section{Sampling PMs of Planar $\hat{G}$ in $O(\hat{N}^\frac32)$ time (Wilson's Algorithm)} \label{app:wilson}

This section addresses sampling part of Theorem \ref{th:pmmodel}. In this section we assume that degrees of $\hat{G}$'s vertices are upper-bounded by $3$.
This is true for $G^*$, $G^*_v$ and $\overline{G}^*_v$ - the only PM models appearing in the paper. Any other constant substituting $3$ wouldn't affect the analysis of complexity. Moreover, Wilson \cite{wilson} shows that any PM model on a planar graph can be reduced to bounded-degree planar model without affecting $O(\hat{N}^\frac32)$ complexity.

\subsection{Structure of the Algorithm}

Denote a sampled PM as $M$, $\mathbb{P}(M) = \hat{Z}^{-1} \prod_{e \in M} c_e$. Wilson's algorithm first applies LT algorithm of \cite{lipton-tarjan} to find a separation $P_1, P_2, P_3$ of $\hat{G}$ ($\max ( | P_1 |, | P_2 | ) \leq \frac23 \hat{N}$, $| P_3 | \leq 2^\frac32 \sqrt{ \hat{N} }$). Then it iterates over $v \in P_3$ and for each $v$ it draws an edge of $M$, saturating $v$. Then it appears that, given this intermediate result, drawing remaining edges of $M$ may be split into two independent drawings over $\hat{G}(P_1)$ and $\hat{G}(P_2)$, respectively, and then the process is repeated recursively. 

It takes $O(\hat{N}^\frac32)$ steps to sample edges attached to $P_3$ at the first step of the recursion, therefore the overall complexity of the Wilson's algorithm is also $O(\hat{N}^\frac32)$.

Subsection \ref{subsec:probs} introduces probabilities required to draw the aforementioned PM samples. Subsections \ref{subsec:step1} and \ref{subsec:step2} describe how to sample edges attached to the separator, while Subsection \ref{subsec:step3} focuses on describing the recursion.

\subsection{Drawing Perfect Matchings} \label{subsec:probs}

For some $Q \in \hat{E}$ consider the probability of getting $Q$ as a subset of $M$:
\begin{align*}
    \mathbb{P}(Q \subseteq M) &= \frac{1}{\hat{Z}} \sum_{\substack{M' \in \text{PM}(\hat{G}) \\ Q \subseteq M'}} \biggl( \prod_{e \in M'} c_e \biggr) \\
    &= \frac{1}{\hat{Z}} \biggl( \prod_{e \in Q} c_e \biggr) \cdot \sum_{M' \in \text{PM}(\hat{G})} \biggl( \prod_{e \in M' \setminus Q} c_e \biggr)
\numberthis \label{eq:qprob}
\end{align*}

Let $\hat{V}_Q = \cup_{e \in Q} e$ and $\hat{G}_{\setminus Q} = \hat{G} (\hat{V} \setminus \hat{V}_Q)$. Then the set $\{ M' \setminus Q \, | \, M' \in \text{PM}(\hat{G}) \}$ coincides with $\text{PM}(\hat{G}_{\setminus Q})$. This yields the following expression
\begin{equation*}
    \mathbb{P}(Q \subseteq M) = \frac{\hat{Z}_{\setminus Q}}{\hat{Z}} \biggl( \prod_{e \in Q} c_e \biggr)
\end{equation*}
where
\begin{equation*}
    \hat{Z}_{\setminus Q} = \sum_{M'' \in \text{PM}(\hat{G}_{\setminus Q})} \biggl( \prod_{e \in M''} c_e \biggr)
\end{equation*}
is a PF of the PM model on $\hat{G}_{\setminus Q}$ induced by the edge weights $c_e$.

For a square matrix $A$ let $A_{c_1, ..., c_l}^{r_1, ..., r_l}$ denote the matrix obtained by deleting rows $r_1, ..., r_l$ and columns $c_1, ..., c_l$ from $A$. Let $[A]_{c_1, ..., c_l}^{r_1, ..., r_l}$ be obtained by leaving only rows $r_1, ..., r_l$ and columns $c_1, ..., c_l$ of $A$ and placing them in this order.

Now let $\hat{V}_Q = \{ v_{i_1}, ..., v_{i_r} \}, i_1 < ... < i_r$. A simple check demonstrates that deleting vertex from a graph preserves the Pfaffian orientation. By induction this holds for any number of vertices deleted. From that it follows that $K_{i_1, ..., i_r}^{i_1, ..., i_r}$ is a Kasteleyn matrix for $\hat{G}_{\setminus Q}$ and then
\begin{equation*}
    \hat{Z}_{\setminus Q} = \pfaffian K_{i_1, ..., i_r}^{i_1, ..., i_r} = \sqrt{\det K_{i_1, ..., i_r}^{i_1, ..., i_r}}
\end{equation*}
resulting in
\begin{equation*}
    \mathbb{P}(Q \subseteq M) = \sqrt{ \frac{\det K_{i_1, ..., i_r}^{i_1, ..., i_r}}{\det K} } \cdot \biggl( \prod_{e \in Q} c_e \biggr)
\end{equation*}

Linear algebra transformations, described in \cite{wilson}, suggest that if $A$ is non-singular, then
\begin{equation*}
    \frac{\det A_{c_1, ..., c_l}^{r_1, ..., r_l}}{\det A} = \pm \det [A^{-1}]_{r_1, ..., r_l}^{c_1, ..., c_l}
\end{equation*}
This observation allows us to express probability (\ref{eq:qprob}) as
\begin{equation*}
    \mathbb{P}(Q \subseteq M) = \sqrt{ | \det [K^{-1}]_{i_1, ..., i_r}^{i_1, ..., i_r} |} \cdot \biggl( \prod_{e \in Q} c_e \biggr)
\end{equation*}

Now we are in the position to describe the first step of the Wilson's recursion.

\subsection{Step 1: Computing Lower-Right Submatrix of $\overline{K}^{-1}$} \label{subsec:step1}

Find a separation $P_1, P_2, P_3$ of $\hat{G}$. The goal is to sample an edge from every $v \in P_3$.

Let $T$ be a set of vertices from $P_3$ and their neighbors, then $| T | \leq 3 | P_3 |$ because each vertex in $\hat{G}$ is of degree at most $3$. Let $T^{**} \subseteq V^{**}$ be a set of the contracted edges (recall $G^{**}$ definition from Subsection \ref{subsec:nd}), containing at least one vertex from $T$, $| T^{**} | \leq | T |$. Then $T^{**}$ is a separator of $G^{**}$ such that
\begin{equation} \label{eq:tstarstar}
    | T^{**} | \leq | T | \leq 3 | P_3 | \leq 3 \cdot 2^\frac32 \sqrt{\hat{N}} \leq 3 \cdot 2^2 \sqrt{| V^{**} |}
\end{equation}
where one uses that, $| V^{**} | = \frac{\hat{N}}{2}$.
Find a nested dissection ordering (Subsection \ref{subsec:nd}) of $V^{**}$ with $T^{**}$ as a top-level separator. This is a correct nested dissection due to Eq.~(\ref{eq:tstarstar}).

Utilizing this ordering, construct $\overline{K}$. Compute $L$ and $U$ - LU-decomposition of $\overline{K}$ ($O(\hat{N}^\frac32)$ time).
Let $t = 2 |T^{**}| \leq 3 \cdot 2^\frac52 \sqrt{\hat{N}}$ and let $\mathcal{I}$ be a shorthand notation for $(\hat{N} - t + 1, ..., \hat{N})$. Using $L$ and $U$, find $D = [ \overline{K}^{-1} ]_\mathcal{I}^\mathcal{I}$, which is a lower-right $\overline{K}^{-1}$'s submatrix of size $t \times t$.

It is straightforward to observe that the $i$-th column of $D$, $d_i$, satisfies
\begin{equation*}
    [L]_\mathcal{I}^\mathcal{I} \times \biggl( [U]_\mathcal{I}^\mathcal{I} \times d_i \biggr) = e_i,
\end{equation*}
where $e_i$ is a zero vector with unity at the $i$-th position. Therefore constructing $D$ is reduced to solving $2 t$ triangular systems, each  of size $t \times t$, resulting in $O(t^3) = O(\hat{N}^\frac32)$ required steps.

\subsection{Step 2: Sampling Edges in the Separator} \label{subsec:step2}

Now, progressing iteratively, one finds $v \in P_3$ which is not yet paired and draw an edge emanating from it. Suppose that the edges, $e_1 = \{ v_{j_1}, v_{j_2} \}, ..., e_k = \{ v_{j_{2 k - 1}}, v_{j_{2 k}} \}$, are already sampled. We assume that by this point we have also computed LU-decomposition $A_k = [K^{-1}]_{j_1, ..., j_{2 k}}^{j_1, ..., j_{2 k}} = L_k U_k$ and we will update it to $A_{k + 1}$ when the new edge is drawn. Then
\begin{equation}
    \mathbb{P}(e_1, ..., e_k \in M) = \sqrt{| \det A_k |} \prod_{j = 1}^k c_{e_j}
\end{equation}

Next we choose $j_{2 k + 1}$ so that $v_{j_{2 k + 1}}$ is not saturated yet. We iterate over $v_{j_{2 k + 1}}$'s neighbors considered as candidates for becoming $v_{j_{2 k + 2}}$. Let $v_j$ to become the next candidate, denote $e_{k + 1} = \{ v_{j_{2 k + 1}}, v_j \}$. For $n \in \mathbb{N}$ let $\alpha(n) = n + 1$ if $n$ is odd and $\alpha(n) = n - 1$ if $n$ is even. Then the identity
\begin{equation}
    K^{-1} = [\overline{K}^{-1}]_{1, 2, ..., \hat{N}}^{\alpha(1), \alpha(2), ..., \alpha(\hat{N})},
    \label{eq:app_K^-1}
\end{equation}
follows from the definition of $\overline{K}$. One deduces from Eq.~(\ref{eq:app_K^-1})
\begin{equation*}
    A_{k + 1} = [K^{-1}]_{j_1, ..., j_{2 k + 1}, j}^{j_1, ..., j_{2 k + 1}, j} = [\overline{K}^{-1}]_{j_1, ..., j_{2 k + 1}, j}^{\alpha (j_1), ..., \alpha(j_{2 k + 1}), \alpha(j)}
\end{equation*}

Constructing $T^{**}$ one has $j_1, ..., j_{2 k + 1}, j, \alpha(j_1), ..., \alpha(j_{2 k + 1}), \alpha(j) > \hat{N} - t$. It means that $A_{k + 1}$ is a submatrix of $D$ with permuted rows and columns, hence $A_{k + 1}$ is known.

We further observe that
\begin{equation*}
    A_{k + 1} = \begin{bmatrix} A_k & y \\ r & d \end{bmatrix} = \begin{bmatrix} L_k & 0 \\ R & 1 \end{bmatrix} \begin{bmatrix} U_k & Y \\ 0 & z \end{bmatrix} = L_{k + 1} U_{k + 1}.
\end{equation*}
Therefore to update $L_{k + 1}$ and $U_{k + 1}$, one just solves the triangular system of equations $R U_k = r$ and $L_k Y = y$, where $R^\top, r^\top, Y, y$ are of size $2 k \times 2$ (this is done in $O(k^2)$ steps), and then compute $z = d - R Y$ which is of the size $2 \times 2$, then set, $u = \det z$.

The probability to pair $v_{j_{2 k + 1}}$ and $v_j$ is
\begin{align*}
    \mathbb{P}(e_{k + 1} \in M \, | \, e_1, ..., e_k \in M) &= \frac{\mathbb{P}(e_1, ..., e_{k + 1} \in M)}{\mathbb{P}(e_1, ..., e_k \in M)} \\
    &= \frac{ \sqrt{| \det A_{k + 1} |} \prod_{j = 1}^{k + 1} c_{e_j}}{ \sqrt{| \det A_k |} \prod_{j = 1}^k c_{e_j}} \\
    &= \frac{ c_{e_{k + 1}} \sqrt{ | u | | \det A_k |} }{\sqrt{ | \det A_k | }} \\
    &= c_{e_{k + 1}} \sqrt{| u |}
\end{align*}

Therefore maintaining $U_{k + 1}$ allows us to compute the required probability and draw a new edge from $v_{j_{2 k + 1}}$. By construction of $\hat{G}$, $v_{j_{2 k + 1}}$ has only $3$ neighbors, therefore the complexity of this step is $O(\sum_{k = 1}^{| P_3 |} k^2) = O(\hat{N}^\frac32)$ because $| P_3 | \leq 2^\frac32 \sqrt{\hat{N}}$.

\subsection{Step 3: Recursion} \label{subsec:step3}

Let $M_{sep} = \{ e_1, e_2, ... \}$ be a set of edges drawn on the previous step, and $\hat{V}_{sep}$ be a set of vertices saturated by $M_{sep}$, $P_3 \subseteq \hat{V}_{sep}$. Given $M_{sep}$, the task of sampling $M \in \text{PM}(\hat{G})$ such that $M_{sep} \subseteq M$ is reduced to sampling perfect matchings $M_1$ and $M_2$ over $\hat{G}(P_1 \setminus V_{sep})$ and $\hat{G}(P_2 \setminus V_{sep})$, respectively. Then $M = M_1 \cup M_2 \cup M_{sep}$ becomes the result of the perfect matching drawn from (\ref{eq:pmprobs}).

Even though only the first step of the Wilson's recursion was discussed so far, any further step in the recursion is done in exactly the same way with the only exception that vertex degrees may become less than $3$, while in $\hat{G}$ they are exactly $3$. Obviously, this does not change the iterative procedure and it also does not affect the complexity analysis.




\section{Random Graph Generation}

As our derivations cover the most general case of planar and $K_{33}$-free graphs, we want to test them on graphs which are as general as possible. Based on Lemma \ref{th:k33free} (notice, that it provides necessary and sufficient conditions for a graph to be $K_{33}$-free) we implement a randomized construction of $K_{33}$-free graphs, which is assumed to cover most general $K_{33}$-free topologies.

Namely, one generates a set of $K_5$'s and random planar graphs, attaching them by edges to a tree-like structure. For simplicity, we slightly relax the condition that random planar components should be triconnected (because it is not clear how to generate such graphs efficiently) and simply require the components to be biconnected. This can be interpreted as constructing $T$, where some neighbor planar nodes are merged (merging planar graphs results in another planar graph). We refer to such non-unique decomposition $T'$ as \textit{partially merged}. Inference and sampling algorithm suggested in Section \ref{sec:dynprog} is applied with no changes to the partially merged decomposition. Our generation process consists of the following two steps.
\begin{enumerate}
\item \textbf{Planar graph generation.} This step accepts $N \geq 3$ as an input and generates a normal biconnected planar graph of size $N$ along with its embedding on a plane. The details of the construction are as follows.

First, a random embedded tree is drawn iteratively. We start with a single vertex, on each iteration choose a random vertex of an already ``grown'' tree, and add a new vertex connected only to the chosen vertex. Items I-V in Fig.~\ref{fig:planargen} illustrate this step.

Then we triangulate this tree by adding edges until the graph becomes biconnected and all faces are triangles, as in the Subsection \ref{subsec:edg} (VI in Figure \ref{fig:planargen}). Next, to get a normal graph, we remove multiple edges possibly produced by triangulation (VII in Fig.~\ref{fig:planargen}). At this point the generation process is complete.

\item \textbf{$K_{33}$-free graph generation.} Here we take $N \geq 5$ as the input and generate a normal biconnected $K_{33}$-free graph $G$ in a form of its partially merged decomposition $T'$. Namely, we generate a tree $T'$ of graphs where each node is either a normal biconnected planar graph or $K_5$, and every two adjacent graphs share a virtual edge. 

The construction is greedy and is essentially a tree generation process from Step 1. We start with $K_5$ root and then iteratively create and attach new nodes. Let $N' < N$ be a size of the already generated graph, $N' = 5$ at first. Notice, that when a node of size $n$ is generated, it contributes $n - 2$ new vertices to $G$.

An elementary step of iteration here is as follows. If $N - N' \geq 3$, a coin is flipped and the type of new node is chosen - $K_5$ or planar. If $N - N' < 3$, $K_5$ cannot be added, so a planar type is chosen. If a planar node is added, its size is drawn uniformly in the range between $3$ and $N - N' + 2$ and then the graph itself is drawn as described in Step 1. Then we attach a new node to a randomly chosen free edge of a randomly chosen node of $T'$. We repeat this process until $G$ is of the desired size $N$. Fig.~\ref{fig:k33freegen} illustrates the algorithm.
\end{enumerate}

To obtain an Ising model from $G$, we sample pairwise interactions for each edge of $G$ independently from $\mathcal{N}(0, 0.1^2)$.

Notice that the tractable Ising model generation procedure is designed in this section solely for the convenience of testing and it is not claimed to be sampling models of any particular practical interest (e.g. in statistical physics or computer science). 

\begin{figure*}[!h] \centering
\centering
\includegraphics[width=0.95\linewidth]{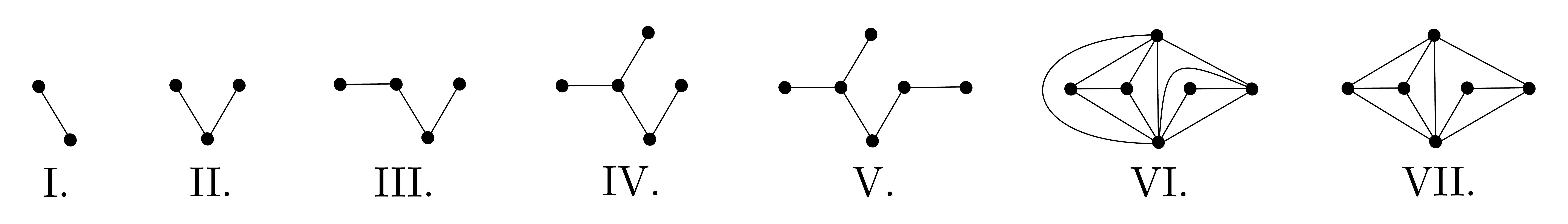}
\caption{Steps of planar graph generation. I-V refers to random tree construction on a plane, VI is a triangulation of a tree, VII is a result after multiple edges removal.}
\label{fig:planargen}
\end{figure*}

\begin{figure*}[!h] \centering
\centering
\includegraphics[width=0.95\linewidth]{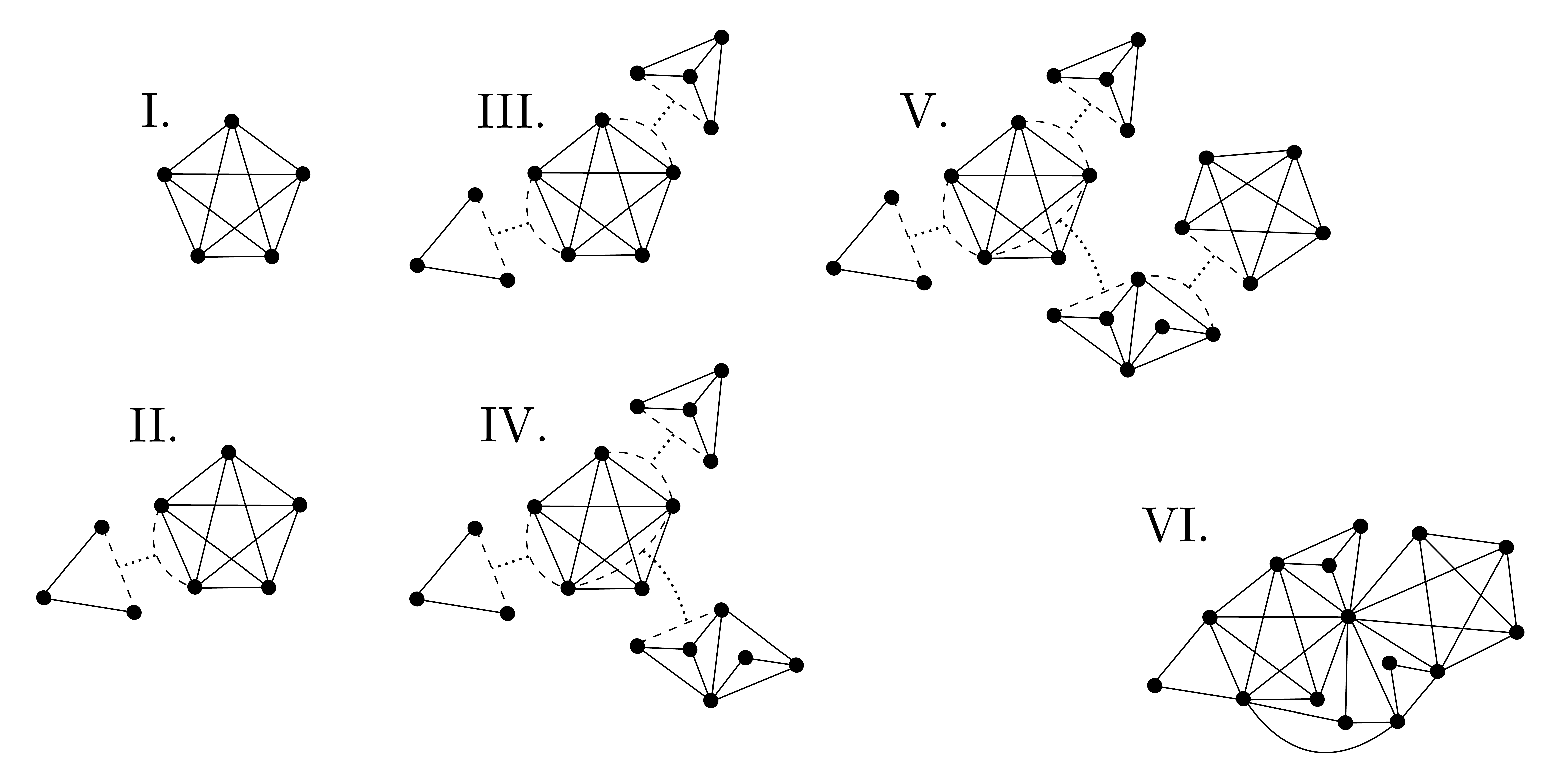}
\caption{Generation of $K_{33}$-free graph $G$ and its partially merged decomposition $T'$. Starting with $K_5$ (I), new components are generated and attached to random free edges (II-V). VI is a result graph $G$ obtained by merging all components in $T'$.}
\label{fig:k33freegen}
\end{figure*}

\section{Future Work}

We conclude by discussing some future research directions:
\begin{itemize}
    \item The class of models considered in the manuscript can be extended even further towards $K_{33}$-free generalizations of (a) the so-called outerplanar graphs, which can then be used for approximate inference and efficient learning in the spirit of \cite{07GJ} and \cite{johnson} respectively; and (b) graphs embedded in the surfaces of $O(1)$ genus \cite{regge,gallucio,cimasoni1,cimasoni2}.
    \item This manuscript was motivated by a larger task of using efficient inference and learning over the most general $K_{33}$-graphs for constructing more general (and thus, hopefully, more powerful) alternatives to traditional Neural Networks for efficient learning. 
\end{itemize}

\end{appendices}

\end{document}